\definecolor{cgreen}{rgb}{0.0, .4, 0.0}
\newcommand{\pf}[1]{\text{Pref}(\mathcal #1)}
\newcommand{\ndashv}{\;\cancel\dashv\;}
\newcommand{\la}[1]{\mathcal L(\mathcal #1)}
\begin{document}
\title{Ordering regular languages: a danger zone}
%
%
\author{Giovanna D'Agostino \and
Davide Martincigh \and
Alberto Policriti}
\authorrunning{G. D'Agostino et al.}
%
\institute{University of Udine, Italy}
\maketitle              
\begin{abstract}
Ordering the collection of states of a given automaton starting from an order of the underlying alphabet is a natural move towards a computational treatment of the  language accepted by the automaton. Along this path, 
Wheeler \emph{graphs} have been recently introduced as an extension/adaptation of the Burrows-Wheeler Transform (the now famous BWT, originally defined on strings) to graphs. These graphs constitute an important data-structure for languages, since they allow a very efficient storage mechanism for the transition function of an automaton, while providing a fast support to  all sorts of  substring queries. This is possible as a consequence of a property---the so-called \emph{path coherence}---valid on Wheeler graphs and consisting in an ordering on nodes that ``propagates'' to (collections of) strings. By looking at a Wheeler graph as an automaton, the ordering on strings corresponds to the co-lexicographic order of the words entering each state. This leads naturally to consider the class of regular languages accepted by Wheeler automata, i.e. the Wheeler languages.

It has been shown that, as opposed to the general case, the classic determinization by powerset construction is polynomial on Wheeler languages. As a consequence, most of the classical problems  turn out to be ``easy''---that is, solvable in polynomial time---on Wheeler languages. Moreover,  deciding whether a DFA is Wheeler and deciding whether a DFA accepts a Wheeler language is polynomial. 

Our contribution here is to put an upper bound to easy problems. For instance, whenever we generalize by switching to general NFAs or by not fixing an order of the underlying alphabet, the above mentioned problems become ``hard''---that is NP-complete or even PSPACE-complete.

\end{abstract}
\section{Introduction}
\label{intro}

Adding an order to a class of structures (graphs, groups, monoids...) is a natural move. Moreover, ordering is a basic data-structuring mechanism, strongly favoring  computational manipulations of the considered class. 

Over the class of finite automata, order has been added e.g. in \cite{ordered_aut}, where the order must propagate along equally labeled transitions. 
Much more recently, in an effort to find a common denominator to a number of different algorithmic techniques, Gagie et al.  proposed (in \cite{Gagie}) a  new simple strategy for enforcing and using  order on a given automaton. Starting from an underlying order of the alphabet, the order on the states (formally given in Definition \ref{WheelerAutomaton}) must: i) agree with the order of the labels of their incoming edges, and ii) be coherent on target/source nodes, for pairs of arcs with equal labels. 
It turns out that this kind of automata, called Wheeler automata, (a) admit an efficient index data structure for searching  subpaths labeled with a given query pattern, and (b)  enable a representation of the graph in a space proportional to that of the edges' labels since the topology can be encoded with just $O(1)$ bits
per node \cite{Gagie} (as well as enabling more advanced compression mechanisms, see \cite{alanko2019tunneling,prezza2020locating}). 
This is in contrast with the fact that general graphs require a logarithmic (in the graph's size) number of bits per edge to be represented, as well as with recent results showing that in general, the subpath search problem can not be solved in subquadratic time, unless the strong exponential time hypothesis is false \cite{backurs2016regular,equi2020complexity,equi2020graphs,gibney2020simple,potechin2018lengths}.

\begin{figure}[ht]%
\begin{center}
\begin{tikzpicture}[->,>=stealth', semithick, initial text={}, auto, scale=.34]
 \node[state, label=above:{}, initial] (0) at (0,0) {$q_0$};
 \node[state, label=above:{}, accepting] (1) at (5,3) {$q_1$};
 \node[state, label=above:{}, accepting] (2) at (10,3) {$q_2$};
 \node[state, label=above:{}] (4) at (5,-3) {$q_4$};
 \node[state, label=above:{}] (3) at (10,-3) {$q_3$};
 \node[state, label=above:{}, accepting] (5) at (15,-3) {$q_5$};

\draw (0) edge [above] node [above] {$a$} (1);
\draw (1) edge [below] node [below] {$c$} (2);
\draw (2) edge [loop above] node [above] {$c$} (N1);
\draw (0) edge [bend left=0, above] node [bend right, above] {$d$} (4);
\draw (4) edge [above] node [above] {$c$} (3);
\draw (3) edge [loop below] node [above, xshift=8] {$c$} (3);
\draw (3) edge[above] node [above] {$f$} (5);
\draw (4) edge[bend left=40, above] node {$f$} (5);
\end{tikzpicture}
\end{center}
    \caption{A WDFAs $\mathcal A$ recognizing the language $\mathcal L_d = ac^*+dc^*f$. Condition (i) of Definition \ref{WheelerAutomaton} implies input consistency and induces the partial order $q_1 < q_2, q_3 < q_4 < q_5$. From condition (ii) it follows that $\delta(q_1, c) \le \delta(q_4, c)$, thus $q_2 < q_3$. Therefore, the only order that could make $\mathcal A$ Wheeler is $q_0 < q_1 < q_2 < q_3 < q_4 < q_5$. The reader can verify that condition (ii) holds for each pair of equally labeled edges.}%
    \label{w}%
\end{figure}
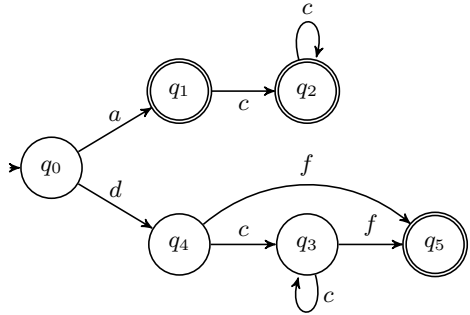

In Figure \ref{w} is depicted an example of a Wheeler automaton (see also Definition \ref{WheelerAutomaton}). Notice that the minimum DFA recognizing $\mathcal L$ is not input consistent, hence not Wheeler; we will return on this point later. 
One is naturally led to consider the class of Wheeler languages, i.e. the regular languages recognized by some Wheeler automaton, as well as to raise the question of whether it is possible to decide, just by looking at a (generic, possibly not Wheeler) DFA, whether the language it recognizes is Wheeler. Wheeler languages have been studied extensively in \cite{ADPP2}, where it is shown that we can decide in polynomial time whether a DFA recognizes a Wheeler language. 
Unfortunately, trying to use the same strategy on a NFA language does not work. In fact, in this paper we  show that  the problem for NFAs becomes PSPACE-complete.
  
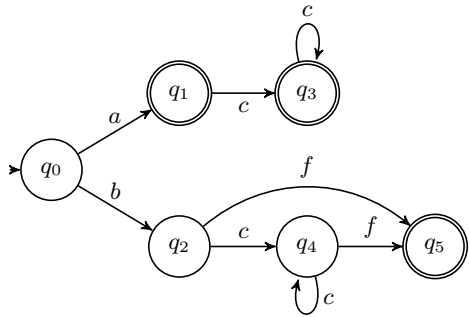
\begin{figure}[ht!]%
\begin{center}
\begin{tikzpicture}[->,>=stealth', semithick, initial text={}, auto, scale=.34]
 \node[state, label=above:{}, initial] (0) at (0,0) {$q_0$};
 \node[state, label=above:{}, accepting] (1) at (5,3) {$q_1$};
 \node[state, label=above:{}, accepting] (2) at (10,3) {$q_3$};
 \node[state, label=above:{}] (4) at (5,-3) {$q_2$};
 \node[state, label=above:{}] (3) at (10,-3) {$q_4$};
 \node[state, label=above:{}, accepting] (5) at (15,-3) {$q_5$};

\draw (0) edge [above] node [above] {$a$} (1);
\draw (1) edge [below] node [below] {$c$} (2);
\draw (2) edge [loop above] node [above] {$c$} (N1);
\draw (0) edge [bend left=0, above] node [bend right, above] {$b$} (4);
\draw (4) edge [above] node [above] {$c$} (3);
\draw (3) edge [loop below] node [above, xshift=8] {$c$} (3);
\draw (3) edge[above] node [above] {$f$} (5);
\draw (4) edge[bend left=40, above] node {$f$} (5);
\end{tikzpicture}
\end{center}
    \caption{A failed attempt to build a WDFA recognizing the language $\mathcal L_b = ac^*+bc^*f$. Condition (i) of Definition \ref{WheelerAutomaton} induces the partial order $q_1 < q_2 < q_3,q_4 < q_5$, and condition (ii) implies $\delta(q_1, c) \le \delta(q_2, c)$, thus $q_3 < q_4$. But if we apply condition (ii) again, we obtain $\delta(q_2, c) \le \delta(q_3, c)$, a contradiction.}%
    \label{nw}%
\end{figure}

Consider  the example in Figure \ref{nw} and confront it with Figure \ref{w}. 
Even tough apparently we made an insignificant change to the automaton, that is we have simply swapped character-labels $d$ and $b$, the new language recognized by the automaton turns out to be \emph{not} Wheeler. This example illustrates that the order of the underlying alphabet plays a significant role when determining whether an automaton, and even a language, is Wheeler. Notice that swapping $d$ and $b$ can be seen as considering a different order on $\Sigma$: from $a \prec c \prec d \prec f$ to $a \prec d \prec c \prec f$. Again a  natural question is raised: when presented with an automaton (or a language) that is \emph{not} Wheeler, is there a different way to order the alphabet so that the automaton (language) becomes Wheeler? Being able to answer in polynomial time to this this question would be very helpful for applications. Unfortunately, as shown in Section \ref{secGW}, this problem turns out to be NP-complete even for DFAs. 

WNFAs have the following useful property: turning a WNFA into a DFA using the (classic) powerset construction actually results in a WDFA with at most twice the number of states of the original WNFA. 
In other words, the  blow-up of states that we can observe  when converting NFAs into DFAs, does not occur for Wheeler non-deterministic automata.
Nevertheless, it is known (see \cite{ADPP}) that a blow-up of  states can occur even when we switch from the \emph{minimum} DFA recognizing a language $\mathcal L$ to the minimum WDFA recognizing $\mathcal L$. 
As a last contribution, in this paper we give an answer to a question put forward in \cite{ADPP}. We provide an algorithm to compute the minimum WDFA starting from the minimum DFA. Our algorithm  works in exponential time in the worst case and, since the dimension of the output might be exponential in the dimension of the input, is  optimal for the task. 

Due to space constraints, we present only the sketches of the proofs of our results. The full proofs, complete with all the missing details, can be found in the Appendix.

\section{Wheeler languages}
First of all, we fix some notation. Let $\Sigma$ denote a finite alphabet endowed with a total order $(\Sigma,\prec)$. We denote by $\Sigma^*$ the set of finite words over $\Sigma$, with $\varepsilon$ being the empty word, and we extend the order $\prec$ over $\Sigma$ to the \emph{co-lexicographic} order $(\Sigma^*, \prec)$, where $\alpha \prec \beta$ if and only the reverse of $\alpha$, i.e. $\alpha$ read from the right to the left, precedes lexicographically the reverse of $\beta$. Given two words $\alpha, \beta \in \Sigma^*$, we denote by $\alpha \dashv \beta$ the property that $\alpha$ is a suffix of $\beta$. 
For a language $\mathcal L \subseteq \Sigma^*$, we denote by $\pf L$ the set of prefixes of strings in $\mathcal L$.
We denote by $\mathcal A = (Q, q_0, \delta, F, \Sigma)$ a finite automaton (NFA), with $Q$ as set of states, $q_0$ initial state, $\delta: Q \times \Sigma \rightarrow 2^Q$ transition function, and $F \subseteq Q$ final states. 
The dimension of $\mathcal A$, denoted by $|\mathcal A|$, is defined as the number of states of $\mathcal{A}$.
An automaton is deterministic (DFA) if $|\delta(q, a)| \le 1$, for all $q\in Q$ and $a\in \Sigma$. As customary, we extend $\delta$ to operate on strings as follows: for all $q\in Q$, $a\in \Sigma$ and $\alpha \in \Sigma^*$
\[
\delta(q,\varepsilon) = \{q\}, \qquad \delta(q,\alpha a)=\bigcup_{v\in \delta(q,\alpha)} \delta(v,a). 
\]
We denote by $\la A = \{\alpha \in \Sigma^*:\, \delta(q_0,\alpha) \cap F \ne \emptyset\}$ the language accepted by the automaton $\mathcal A$.
We make the assumption that every automaton is basic, that is, every state is reachable from the initial state and every state can reach at least one final state. Notice that this assumption is not restrictive, since removing from a NFA every state not reachable from $q_0$ and every state from which is impossible to reach a final state can be done in linear time and does not change the accepted language.
It immediately follows that: 
\begin{itemize}
    \item there might be only one state without incoming edges, namely $q_0$; 
    \item every word that can be read starting from $q_0$ belongs to $\pf L$.
\end{itemize}
Lastly, given a finite set $Q$, totally ordered by the relation $<$, we say that $I \subseteq Q$ is an interval if and only if for all $q,q',q'' \in Q$ with $q\le q'\le q''$, if $q,q''\in I$ then $q' \in I$. We denote by $I:=[q_{\min}, q_{\max}]$ a generic interval, where $q_{\min}$ ($q_{\max}$) is the minimum (maximum) element of $I$ with respect to <.

The class of Wheeler automata has been recently introduced in \cite{Gagie}. An automaton in this class has the property that there exists a total order on its states that is propagated along equally labeled transition. Moreover, the order must be compatible with the underlying order of the alphabet:

\begin{definition}[Wheeler Automaton]
\label{WheelerAutomaton}
A Wheeler NFA (WNFA) $\mathcal{A}$ is a NFA $(Q,q_0,\delta,F,\Sigma)$  endowed with a binary relation
<, such that: $(Q,<)$ is a linear order having the initial state $q_0$ as minimum, $q_0$ has no in-going edges, and
the following two (Wheeler) properties are satisfied. Let $v_1 \in \delta(u_1, a_1)$ and $v_2 \in \delta(u_2, a_2)$:
\begin{enumerate}[label = (\roman*)]
    \item $a_1 \prec a_2 \,\rightarrow \, v_1 < v_2$ 
    \item $(a_1 = a_2 \wedge u_1 < u_2) \,\rightarrow \, v_1 \le v_2$.
\end{enumerate}
A Wheeler DFA (WDFA) is a WNFA in which the cardinality of $\delta(u,a)$ is always less than or equal to
one.
\end{definition}

In Figure \ref{w} is depicted an example of a WDFA.

\begin{remark}
A consequence of Wheeler property (i) is that A is \emph{input-consistent}, that is all transitions
entering a given state $u \in Q$ have the same label: if $u \in \delta(v,a)$ and $u \in \delta(w,b)$, then $a=b$. Therefore the function $\lambda: Q\rightarrow \Sigma$ that associate to each state the unique label of its incoming edges is well defined. For the state $q_0$, the only one without incoming edges, we set $\lambda(q_0) := \#$. 
\end{remark}

In \cite{Gagie} it is shown that WDFAs have a property called \emph{path coherence}: let $\mathcal A = (Q,q_0,\delta,F,\Sigma)$ be a WDFA according to the order $(Q,<)$. 
Then for every interval of states $I=[q_i, q_j]$ and for all $\alpha \in \Sigma^*$, the set $J$ of states reachable starting from any state of $I$ by reading $\alpha$ is also an interval.
\emph{Path coherence} allows us to transfer the order < over the states of $Q$ to the co-lexicographic order $\prec$ over the words entering the states: for each state $q$ define the set $I_q = \{\alpha:\,\delta(q_0,\alpha)=q)\}$. 
With abuse of notation, we extend $\preceq$ to subsets of $\Sigma^*$ as follows: $U \preceq V$, with $U,V \subseteq \Sigma^*$, if and only if $\alpha\prec \beta$ for each $\alpha\in U,\ \beta\in V$ such that $\{\alpha,\beta\} \notin U \cap V$.
Then, two states $q$ and $p$ with $I_q \ne I_p$ satisfy $q < p$ if and only $I_q \preceq I_p$ (again proved in \cite{ADPP}).
An immediate consequence of this fact is that a WDFA admits an unique order of its states that makes it Wheeler and this order is univocally determined by the co-lexicographic order of any word entering its states.
This result is important for two different reasons. First of all, it makes possible to decide in polynomial time whether a DFA is Wheeler: for each state $q$, pick a word $\alpha_q$ entering it and order the states reflecting the co-lexicographic order of the words \{$\alpha_q:\, q \in Q\}$; then check if the order satisfies the Wheeler conditions. 
Secondly, it is the key to adapt Myhill-Nerode Theorem to Wheeler automata. We recall the following defintion.

\begin{definition}[Myhill-Nerode equivalence]
\label{equivL}
Let $\mathcal L \subseteq \Sigma^*$ be a language. Given a word $\alpha \in \Sigma^*$, we define the \emph{right context} of $\alpha$ as  
\[
\alpha^{-1}\mathcal L := \{\gamma \in \Sigma^*:\, \alpha\gamma \in \mathcal L\},
\]
and we denote by $\equiv_\mathcal L$ the Myhill-Nerode equivalence on $\pf L$ defined as
\[
\alpha \equiv_\mathcal L \beta \iff \alpha^{-1}\mathcal L = \beta^{-1}\mathcal L.
\]
\end{definition}

The (classic) Myhill-Nerode Theorem, among many other things, establishes a bijection between equivalence classes of $\equiv_\mathcal L$ and the states of the minimum DFA recognizing $\mathcal L$. This minimum automaton is also unique up to isomorphism and a similar result, fully proved in \cite{ADPP2}, holds for Wheeler languages as well. 

In order to state such an analogous of Myhill-Nerode Theorem for  Wheeler languages, the equivalence $\equiv_\mathcal L$ is replaced by the equivalence $\equiv_\mathcal L^c$ defined below.

\begin{definition}
The input consistent, convex refinement $\equiv_\mathcal L^c$ of $\equiv_\mathcal L$ is defined as follows. $\alpha \equiv_\mathcal L^c \beta$ if and only if
\begin{itemize}
    \item $\alpha \equiv_\mathcal L \beta$,
    \item $\alpha$ and $\beta$ end with the same character, and 
2

    \item for all $\gamma \in \pf L$, if $\min(\alpha, \beta) \preceq \gamma \preceq \max(\alpha,\beta)$, then $\alpha \equiv_\mathcal L\gamma \equiv_\mathcal L \beta$.
\end{itemize}
\end{definition}

The  Myhill-Nerode Theorem for Wheeler languages proves that there exists a minimum (in the number of states) WDFA recognizing $\mathcal L$. As in the classic case,  states of the minimum automaton are, in fact, $\equiv_\mathcal L^c$-equivalence classes, this time consisting of \emph{intervals} of words. Also such WDFA is unique up to isomorphism.   

A further important consequence, especially for testing Wheelerness, is stated in the following Lemma (again proved in \cite{ADPP2}).

\begin{lemma}
\label{monotone}
A regular language $\mathcal L$ is Wheeler if and only if all monotone sequences in $(\pf L, \prec)$ become eventually constant modulo $\equiv_\mathcal L$. In other words, for all sequences $(\alpha_i)_{i \ge 0}$ in $\pf L$ with
\[
\alpha_1 \preceq \alpha_2 \preceq \dots \alpha_i \preceq \dots \quad \text{ or }\quad \alpha_1 \succeq \alpha_2 \succeq \dots \succeq \alpha_i \succeq \dots
\]
there exists an $n$ such that $\alpha_h \equiv_\mathcal L \alpha_k$, for all $h,k \ge n$. 
\end{lemma}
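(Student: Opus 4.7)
The plan is to handle the two directions separately, using the Wheeler Myhill--Nerode theorem of~\cite{ADPP2} (recalled just above this lemma) to translate between ``$\mathcal{L}$ is Wheeler'' and the finiteness of the quotient $\pf L/{\equiv_\mathcal L^c}$.

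For the direction $(\Rightarrow)$, pick a WDFA $\mathcal A = (Q,q_0,\delta,F,\Sigma)$ recognizing $\mathcal L$ with its (unique) Wheeler order $(Q,<)$. By the path-coherence fact recalled in the excerpt, the map $\Phi\colon\pf L\to Q$ sending $\alpha\mapsto\delta(q_0,\alpha)$ is weakly monotone from $(\pf L,\prec)$ to $(Q,<)$: if $\alpha\prec\beta$ and $\Phi(\alpha)=q$, $\Phi(\beta)=p$ with $q\neq p$, then $I_q\preceq I_p$ forces $q<p$. Given a monotone sequence $(\alpha_i)_{i\ge 0}$ in $\pf L$, the image sequence $(\Phi(\alpha_i))_{i\ge 0}$ is weakly monotone in the finite set $Q$, hence eventually constant at some state $q$. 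For any two indices $h,k$ past this stabilization point, $\alpha_h$ and $\alpha_k$ both reach $q$, so they share the same right context (the language accepted from $q$), i.e.\ $\alpha_h\equiv_\mathcal L\alpha_k$.

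For the direction $(\Leftarrow)$, I would show that the sequence hypothesis forces the quotient $\pf L/{\equiv_\mathcal L^c}$ to be finite, at which point the Wheeler Myhill--Nerode theorem produces a WDFA for $\mathcal L$. Regularity of $\mathcal L$ gives finitely many $\equiv_\mathcal L$-classes, and $\Sigma$ is finite, so finiteness of $\pf L/{\equiv_\mathcal L^c}$ is equivalent to finiteness of the set of maximal $\prec$-convex subsets of $\pf L$ lying inside a single $\equiv_\mathcal L$-class (call these \emph{blocks}). Assume for contradiction that there are infinitely many blocks. By pigeonhole some $\equiv_\mathcal L$-class $C$ contains infinitely many blocks, and by extracting a monotone sub-enumeration we may list them as $B_1\prec B_2\prec\cdots$ (the descending case is symmetric). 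By maximality, between consecutive $B_i$ and $B_{i+1}$ there exists a word $\beta_i\in\pf L\setminus C$; choosing any $\alpha_i\in B_i$, the interleaved sequence $\alpha_1\prec\beta_1\prec\alpha_2\prec\beta_2\prec\cdots$ is monotone in $\pf L$ yet alternates between $C$ and its complement, so it is not eventually constant modulo $\equiv_\mathcal L$, contradicting the hypothesis.

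The delicate step is the interleaving in $(\Leftarrow)$: the $\alpha_i$'s alone, although they belong to distinct blocks, all lie in the single class $C$ and so form a sequence that is already constant modulo $\equiv_\mathcal L$. The witness to non-triviality must come from the $\beta_i$'s lying strictly between consecutive blocks, whose existence is guaranteed precisely by maximality. Everything else---extracting the monotone sub-enumeration of blocks, invoking the Wheeler Myhill--Nerode theorem to manufacture the WDFA, and the weak monotonicity of $\Phi$ in $(\Rightarrow)$---is routine once this combinatorial observation is in place.
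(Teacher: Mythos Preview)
The paper does not supply its own proof of this lemma: it is quoted from~\cite{ADPP2} (``again proved in~\cite{ADPP2}''), so there is no in-paper argument to compare against. That said, your proposal is a sound self-contained proof and is essentially the natural one.

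A couple of small points worth tightening. In the $(\Rightarrow)$ direction, your ``$\Phi$ is weakly monotone'' step is correct but deserves one more line: for a WDFA the sets $I_q$ are pairwise disjoint, and the paper's quoted fact gives that $q<p\Leftrightarrow I_q\preceq I_p$; hence from $\alpha\prec\beta$ with $\Phi(\alpha)\neq\Phi(\beta)$ you get $\Phi(\alpha)<\Phi(\beta)$ because the alternative $I_{\Phi(\beta)}\preceq I_{\Phi(\alpha)}$ would force $\beta\prec\alpha$. In the $(\Leftarrow)$ direction, the reduction ``finitely many $\equiv_\mathcal L^c$-classes $\Leftrightarrow$ finitely many blocks'' is justified because in co-lexicographic order the last character is the most significant, so inside a block the words ending in a fixed character form a sub-interval; thus each block contributes at most $|\Sigma|+1$ classes. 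Finally, when you write ``by extracting a monotone sub-enumeration'' of the infinitely many blocks of $C$, you are implicitly using that any infinite sequence in a total order admits an infinite monotone subsequence; this is standard, but it is the step that licenses the ``descending case is symmetric'' remark, so it is worth naming. The interleaving with the separators $\beta_i\in\pf L\setminus C$ is exactly the right move and uses maximality of the blocks in the way you describe.
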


Lemma \ref{monotone} shows how it is possible to recognize whether a language $\mathcal L$ is Wheeler simply by verifying a property on the words of $\pf L$: trying to find a WDFA that recognizes $\mathcal L$ is no longer needed to decide the Whelerness of $\mathcal L$. 
As it turns out, we can verify whether the property depicted in Lemma \ref{monotone} is satisfied just by looking at \emph{any} DFA recognizing $\mathcal L$, as shown in Theorem \ref{polynomialW} (see \cite{ADPP2}).

\begin{theorem}
\label{polynomialW}
Let $\mathcal A$ be a DFA such that $\mathcal L = \la A$,  with initial state $q_0$ and dimension $n = |\mathcal A|$. 
\\$\mathcal L$ is not Wheeler if and only if there exist $\mu, \nu$ and $\gamma$ in $\Sigma^*$, with $\gamma \ndashv \mu,\nu$, such that:
\begin{enumerate}
    \item $\mu \not\equiv_\mathcal L \nu$ and they label paths from $q_0$ to states $u$ and $v$, respectively;
    \item $\gamma$ labels two cycles, one starting from $u$ and one starting from $v$;
    \item $\mu, \nu \prec \gamma$\; or \; $\gamma \prec \mu,\nu$.
\end{enumerate}
The length of the words $\mu, \nu$ and $\gamma$ satisfying the above  can be bounded: 
\begin{enumerate}
\setcounter{enumi}{3}
    \item $|\mu|, |\nu| \le |\gamma| \le n^3+2n^2+n+2$.
\end{enumerate}
\end{theorem}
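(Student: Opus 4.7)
The strategy is to establish the two directions of the equivalence and then control the lengths.

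For the easy direction ($\Leftarrow$), given $\mu,\nu,\gamma$ satisfying (1)--(3), I would assume WLOG $\mu \prec \nu \prec \gamma$ (the case $\gamma \prec \mu,\nu$ is symmetric, yielding a decreasing sequence). Since $\gamma$ labels cycles at $u=\delta(q_0,\mu)$ and $v=\delta(q_0,\nu)$, for every $k$ one has $\mu\gamma^k \equiv_\mathcal L \mu$ and $\nu\gamma^k \equiv_\mathcal L \nu$, so the interleaved sequence $\mu,\nu,\mu\gamma,\nu\gamma,\mu\gamma^2,\nu\gamma^2,\ldots$ alternates between the two distinct $\equiv_\mathcal L$-classes of $\mu$ and $\nu$. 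I would then verify that this sequence is strictly increasing in co-lex: the comparison $\mu\gamma^k \prec \nu\gamma^k$ collapses to $\mu \prec \nu$ after cancelling the common right-suffix $\gamma^k$, while $\nu\gamma^k \prec \mu\gamma^{k+1}$ reduces to checking $\bar\nu <_{\mathrm{lex}} \bar\gamma\bar\mu$; this last inequality follows from $\nu \prec \gamma$ together with $\gamma \ndashv \nu$, which force the first disagreement between $\bar\nu$ and $\bar\gamma$ to occur inside $\bar\gamma$ with $\bar\nu$'s character strictly smaller (so the same disagreement persists in $\bar\gamma\bar\mu$). Lemma~\ref{monotone} then yields that $\mathcal L$ is not Wheeler.

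For the converse ($\Rightarrow$), Lemma~\ref{monotone} supplies a monotone co-lex sequence $(\alpha_k)_{k\ge 0}$ in $\pf L$ (WLOG increasing) not eventually constant modulo $\equiv_\mathcal L$. Running the sequence on $\mathcal A$ and applying pigeonhole on the $n$ states, I obtain two distinct states $u,v$ visited infinitely often with $u\not\equiv_\mathcal L v$; otherwise the tail of the sequence would be eventually $\equiv_\mathcal L$-constant. From a suitably late pair $(\alpha_i,\alpha_j)$ with $\alpha_i\to u$ and $\alpha_j\to v$, the co-lex increasing monotonicity forces a long common right portion, out of which a pumping argument in the product $\mathcal A\times\mathcal A$ extracts a word $\gamma$ labelling a cycle at both $u$ and $v$. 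The prefixes of $\alpha_i,\alpha_j$ up to the cycle supply $\mu$ and $\nu$; the co-lex condition (3) is inherited from the monotonicity of $(\alpha_k)$, and $\gamma\ndashv\mu,\nu$ is secured by choosing where to cut between head and repeating part.

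For the length bound $|\mu|,|\nu|\le|\gamma|\le n^3+2n^2+n+2$, I would perform the search in a product of up to three copies of $\mathcal A$: two copies tracking the states reached by the candidate $\mu$ and $\nu$, and an additional component enforcing the ordering condition (3) together with $\gamma\ndashv\mu,\nu$. A minimal witness then corresponds to a shortest path/cycle in this product, whose length is bounded by its number of configurations, giving the cubic polynomial with the low-order corrections accounting for the initialisation/boundary bookkeeping. The main obstacle is exactly this length bound: a naive $n^2$ pumping in $\mathcal A\times\mathcal A$ yields a common cycle $\gamma$ but may destroy the co-lex inequalities in (3) or the constraint $\gamma\ndashv\mu,\nu$. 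The hard step is to show that whenever the relevant triple of configurations repeats, the intermediate segment can be excised without breaking these order/suffix relations -- this is what keeps the bound cubic rather than exponential, and requires a careful case analysis on the position where $\bar\gamma$ first disagrees with $\bar\mu,\bar\nu$.
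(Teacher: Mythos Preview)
The paper does not contain a proof of Theorem~\ref{polynomialW}: the result is quoted from~\cite{ADPP2} (``as shown in Theorem~\ref{polynomialW} (see~\cite{ADPP2})''), and neither the main text nor the Appendix supplies an argument for it. Hence there is no in-paper proof to compare your proposal against.

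On the merits of your sketch itself: the $(\Leftarrow)$ direction is essentially complete and correct. The interleaved sequence $\mu,\nu,\mu\gamma,\nu\gamma,\dots$ is indeed co-lex monotone and non-eventually-constant modulo $\equiv_{\mathcal L}$, and your justification of $\nu\gamma^{k}\prec\mu\gamma^{k+1}$ via the hypothesis $\gamma\ndashv\nu$ (which rules out $\bar\gamma$ being a prefix of $\bar\nu$) is the right observation. The $(\Rightarrow)$ direction, however, is still at the level of a plan rather than a proof. Two points would need real work: (a) from an arbitrary monotone non-eventually-constant sequence you do not immediately get a \emph{single} word $\gamma$ that labels cycles at \emph{both} $u$ and $v$; your product-automaton pumping idea is the right tool, but you must argue carefully that a long common suffix of two far-enough terms actually exists and can be pumped simultaneously without destroying condition~(3) or reintroducing $\gamma\dashv\mu,\nu$; (b) the cubic length bound $n^{3}+2n^{2}+n+2$ is the crux, and your last paragraph acknowledges but does not resolve the obstacle that excising a repeated triple of configurations may break the co-lex inequalities. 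As stated, this is a promising outline whose hard half remains to be written; for the full argument you would need to consult~\cite{ADPP2}.
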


Since in this work we make an extensive use of Theorem \ref{polynomialW}, here is a simple example on how and why it works. Consider the automata depicted in Figure \ref{exth}.  
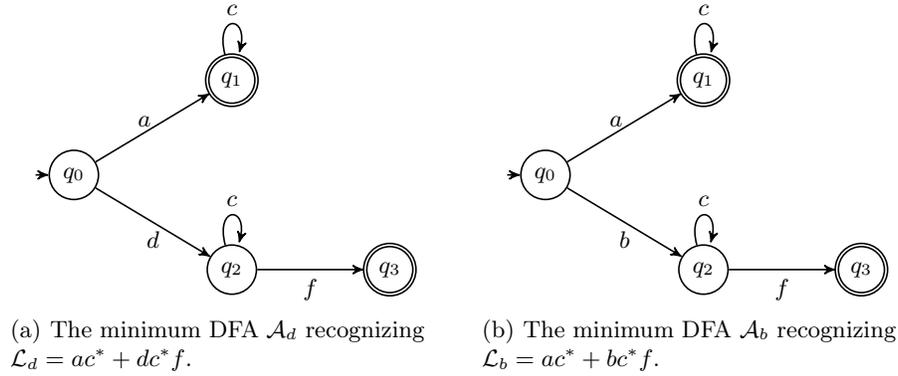
\begin{figure}[ht]%
    \centering
    \subfigure[The minimum DFA $\mathcal A_d$ recognizing $\mathcal L_d = ac^*+dc^*f$.] {
    \label{mwa}
\begin{tikzpicture}[->,>=stealth', semithick, initial text={}, auto, scale=.42]
 \node[state, minimum size=2pt, initial] (0) at (0,0) {$q_0$};
 \node[state, label=above:{}, minimum size=2pt, accepting] (1) at (5,3) {$q_1$};
 \node[state, label=above:{}, minimum size=2pt] (2) at (5,-3) {$q_2$};
 \node[state, label=above:{}, minimum size=2pt, accepting] (3) at (10,-3) {$q_3$};

\draw (0) edge [above] node [above, xshift=-3, yshift=-3] {$a$} (1);
\draw (0) edge [below] node [below] {$d$} (2);
\draw (1) edge [loop above] node {$c$} (1);
\draw (2) edge [loop above] node {$c$} (2);
\draw (2) edge [above] node [below] {$f$} (3);
\end{tikzpicture}
    }
    \qquad
\subfigure[The minimum DFA $\mathcal A_b$ recognizing $\mathcal L_b = ac^*+bc^*f$.] {
    \label{mnwa}
\begin{tikzpicture}[->,>=stealth', semithick, initial text={}, auto, scale=.42]
 \node[state, minimum size=2pt, initial] (0) at (0,0) {$q_0$};
 \node[state, label=above:{}, minimum size=2pt, accepting] (1) at (5,3) {$q_1$};
 \node[state, label=above:{}, minimum size=2pt] (2) at (5,-3) {$q_2$};
 \node[state, label=above:{}, minimum size=2pt, accepting] (3) at (10,-3) {$q_3$};

\draw (0) edge [above] node [above, xshift=-3, yshift=-3] {$a$} (1);
\draw (0) edge [below] node [below] {$b$} (2);
\draw (1) edge [loop above] node {$c$} (1);
\draw (2) edge [loop above] node {$c$} (2);
\draw (2) edge [above] node [below] {$f$} (3);
\end{tikzpicture}
}%
    \caption{The minimum DFAs recognizing the languages $\mathcal L_d$ (Wheeler) and $\mathcal L_b$ (not Wheeler).}%
    \label{exth}%
\end{figure}
As shown in Figure \ref{w} and \ref{nw}, the language $\mathcal L_d$ is Wheeler whereas $\mathcal L_b$ is not, as can be easily proved using Theorem \ref{polynomialW}. In fact, consider the automaton $\mathcal A_b$. 
By setting $\mu := a$, $\nu := b$ and $\gamma := c$, one can  verify  conditions 1-3 of the theorem are satisfied. Notice that  condition $a \not\equiv_{\mathcal L_b} b$ follows immediately from the fact that $\delta(q_0, a) \ne \delta(q_0, b)$ in the minimum DFA $\mathcal A_b$. 

If we try to transpose the same reasoning to the automaton $\mathcal A_d$ by setting $\mu = a$, $\nu = d$ and $\gamma = c$, condition 3 of Theorem \ref{polynomialW} is no longer satisfied. We can not find 3 words satisfying conditions 1-3 of Theorem \ref{polynomialW}, therefore $\mathcal L_b$ is Wheeler.

\vskip 5mm

The polynomial bound given by condition 4 of Theorem \ref{polynomialW} allows us to design an algorithm that decides whether a given DFA recognizes a Wheeler language: using dynamic programming (see \cite{ADPP}), it is possible to keep track of all the relevant paths and cycles inside the DFA and check, in polynomial time, whether there exists three words satisfying the conditions of the theorem.  

Things change if, instead of a DFA, we are given a NFA. Trying to exploit the same idea used for DFAs does not work: the problem of deciding whether two words $\mu$ and $\nu$ read by a NFA are Myhill-Nerode equivalent is PSAPCE-complete, whereas is polynomial on DFAs (simply compute the minimum DFA using Hopcroft's algorithm and check whether the two states reached by $\mu$ and $\nu$ coincide). Even worse, the straightforward attempt of building the minimum DFA recognizing the NFA's language might lead to a blow-up of the sates, resulting in a exponential time (and exponential space) algorithm. 

We show that the problem of deciding whether a NFA recognizes a Wheeler language is indeed hard, but doesn't require exponential time to be solved: the problem turns out to be PSPACE-complete. To show this, we first need to adapt Theorem \ref{polynomialW} to work on NFAs, as described in the following corollary.

\begin{corollary}
\label{nfa length}
Let $\mathcal A = (Q, q_0, \delta, F, \Sigma)$ be a NFA of dimension $n := |\mathcal A|$. Then $\mathcal L := \la A$ is not Wheeler if and only if there exist three words $\mu, \nu, \gamma$, with $\gamma \ndashv \mu,\nu$, such that
\begin{enumerate}
    \item $\mu\gamma^i \not\equiv_{\mathcal L} \nu\gamma^j$ for all $0 \le i,j \le 2^n$; 
    \item $\gamma$ labels two cycles, one starting from a state $p \in \delta(q_0,\mu)$ and one from a state $r \in \delta(q_0,\nu)$;
    \item $\mu, \nu \prec \gamma$ or $\gamma \prec \mu, \nu$.
\end{enumerate}
The length of the words $\mu, \nu$ and $\gamma$ satisfying the above can be bounded: 
\begin{enumerate}
\setcounter{enumi}{3}
    \item $|\mu|, |\nu| < |\gamma| \in O(n^3\cdot2^{3n})$.
\end{enumerate}
\end{corollary}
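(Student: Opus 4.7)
The plan is to prove both directions by reducing to Theorem \ref{polynomialW} applied to an auxiliary DFA derived from $\mathcal A$, and then lifting the resulting cycles on subsets back to cycles on individual NFA-states. The reformulated condition 1 (quantified over $0 \le i, j \le 2^n$) is designed precisely to match the fact that, in an NFA, the sequence of state-sets reached by $\mu\gamma^i$ is eventually periodic of preperiod plus period at most $2^n$, so a finite check captures the infinite behaviour required by Lemma \ref{monotone}.

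For the ($\Leftarrow$) direction, assume conditions 1--3 hold, say with $\mu,\nu \prec \gamma$ (the other case is symmetric). Since $\gamma \ndashv \mu,\nu$, the chains $(\mu\gamma^i)_{i\ge 0}$ and $(\nu\gamma^j)_{j\ge 0}$ are strictly increasing in the co-lex order and interleave into one monotone sequence in $\pf L$. The two sequences of state-sets $(\delta(q_0,\mu\gamma^i))_i$ and $(\delta(q_0,\nu\gamma^j))_j$ live in $2^Q$ and are thus eventually periodic with preperiod plus period at most $2^n$; since $\alpha^{-1}\mathcal L$ for an $\alpha$ read in $\mathcal A$ is determined by $\delta(q_0,\alpha)$, condition 1 for $i,j \le 2^n$ extends automatically to all $i,j \ge 0$. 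So the interleaved monotone sequence contains at least two distinct $\equiv_\mathcal L$-classes each occurring infinitely often, and Lemma \ref{monotone} gives that $\mathcal L$ is not Wheeler.

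For the ($\Rightarrow$) direction, assume $\mathcal L$ is not Wheeler and apply Theorem \ref{polynomialW} to the minimum DFA $\mathcal A'$ for $\mathcal L$, whose size is at most $2^n$. This yields $\mu',\nu',\gamma'$ of length $O(2^{3n})$ satisfying the DFA conditions. The gap I must bridge is that a cycle $\delta'(u,\gamma')=u$ in $\mathcal A'$ only forces the subset sequence $U_i := \delta(q_0, \mu'\gamma'^i)$ to stay in a single $\equiv_\mathcal L$-class, not to be constant: as a sequence in $2^Q$ it has preperiod $M_\mu \le 2^n$ and period $d_\mu \le 2^n$. Replacing $\mu'$ by $\mu := \mu'\gamma'^{M_\mu}$ reaches the periodic part, where $\delta(U,\gamma'^{d_\mu}) = U$ for $U := \delta(q_0,\mu)$. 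A standard in-degree argument on the graph over $U$ with edges $p \to p'$ iff $p' \in \delta(p, \gamma'^{d_\mu})$ produces an NFA cycle at some $p \in U$ of $\gamma'^{d_\mu}$-length at most $n$, i.e.\ a $\gamma'^{c_\mu d_\mu}$-cycle with $c_\mu \le n$. Performing the symmetric construction for $\nu'$ and setting $\gamma := \gamma'^L$ for $L$ a common multiple of $c_\mu d_\mu, c_\nu d_\nu$ with $L > \max(M_\mu,M_\nu)$, all three conditions follow: condition 2 from the lifted NFA cycles; condition 1 because $\delta(q_0,\mu\gamma^i) = U$ and $\delta(q_0,\nu\gamma^j) = V$ sit in distinct MN-classes $u \ne v$ of $\mathcal A'$; and condition 3 because $\gamma \ndashv \mu,\nu$ (using $\gamma' \ndashv \mu',\nu'$ together with $L > \max(M_\mu,M_\nu)$) lets the co-lex comparison be read off the portion of $\mu,\nu,\gamma$ preceding their shared $\gamma'$-power suffixes.

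The main obstacle will be the length bound of condition 4. Naively cascading the ``DFA result on $\mathcal A'$'' with the subset-to-state lifting yields only $O(n^2 \cdot 2^{5n})$, whereas the stated $O(n^3 \cdot 2^{3n})$ must instead come from invoking Theorem \ref{polynomialW} on a \emph{single} auxiliary DFA-like structure of size $O(n \cdot 2^n)$---the most natural candidate being one built over pairs $(U,p)$ with $U \subseteq Q$ reachable in the determinization and $p \in U$, in which a single cycle pattern simultaneously encodes the MN-non-equivalence of the $U$-components and the NFA cycle at the witness $p$. Setting up this auxiliary automaton so that Theorem \ref{polynomialW} applies as a black box is the delicate step; once in place, the cubic bound of the DFA case transfers cleanly as $(n\cdot 2^n)^3 = O(n^3\cdot 2^{3n})$.
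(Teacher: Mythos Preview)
Your $(\Leftarrow)$ direction is correct and, if anything, a bit more direct than the paper's: you invoke Lemma~\ref{monotone} on the merged monotone sequence $(\mu\gamma^i)_i \cup (\nu\gamma^j)_j$, after observing that the subset sequence $\delta(q_0,\mu\gamma^i)$ takes at most $2^n+1$ values so that condition~1 for $i,j\le 2^n$ propagates to all $i,j$. The paper instead reduces to Theorem~\ref{polynomialW} on the minimum DFA $\mathcal D$, but the content is the same.

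The gap is in the $(\Rightarrow)$ direction, specifically the length bound. Your lifting argument (subset periodicity followed by an in-degree/cycle argument on $U$) is logically sound for establishing conditions 1--3, but as you yourself note it only gives $|\gamma|\in O(n^2\cdot 2^{5n})$, because the subset period $d_\mu$ may be as large as $2^n$. Your proposed fix---applying Theorem~\ref{polynomialW} to an auxiliary automaton on pairs $(U,p)$ with $U\subseteq Q$ reachable and $p\in U$---does not work as stated: that structure is \emph{not} a DFA, since on reading a letter $a$ the component $p$ moves nondeterministically to any $p'\in\delta(p,a)$. Theorem~\ref{polynomialW} therefore does not apply as a black box, and determinizing this hybrid automaton reintroduces the blow-up you were trying to avoid.

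The paper sidesteps the whole issue with a much simpler pigeonhole. Having obtained $\hat\mu,\hat\nu,\hat\gamma$ from Theorem~\ref{polynomialW} on the minimum DFA, it does \emph{not} look at the subset sequence at all: it picks one concrete run $t_0,t_1,\dots,t_m$ of $\hat\mu\hat\gamma^{\,n}$ in the NFA $\mathcal A$ and applies pigeonhole to the $n+1$ checkpoint states $t_{|\hat\mu|},\,t_{|\hat\mu|+|\hat\gamma|},\,\dots,\,t_{|\hat\mu|+n|\hat\gamma|}$. This yields $0\le h<k\le n$ with $t_{|\hat\mu|+h|\hat\gamma|}=t_{|\hat\mu|+k|\hat\gamma|}$, i.e.\ an NFA cycle labelled $\hat\gamma^{\,k-h}$ from a state in $\delta(q_0,\hat\mu\hat\gamma^{\,h})$, with $k-h\le n$. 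Doing the same for $\hat\nu$ and setting $\mu:=\hat\mu\hat\gamma^{\,h}$, $\nu:=\hat\nu\hat\gamma^{\,h'}$, $\gamma:=\hat\gamma^{\,\mathrm{lcm}(k-h,\,k'-h')\cdot n}$ gives $|\gamma|\le n\cdot n^2\cdot|\hat\gamma|\in O(n^3\cdot 2^{3n})$ directly, with no auxiliary automaton needed. The point you missed is that a cycle for \emph{one} NFA state suffices for condition~2, so there is no reason to wait for the whole subset $\delta(q_0,\hat\mu\hat\gamma^{\,i})$ to stabilize.
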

\begin{proof}[Sketch]
 $(\Longrightarrow)$ Let $\mathcal D$ be the minimum DFA recognizing the language $\mathcal L$, with initial state $\hat q_0$. Recall that $\mathcal D$ might have up to $2^n$ states. From Theorem \ref{polynomialW} we know that $\mathcal L$ is not Wheeler if and only if there exist $\hat\mu, \hat\nu$ and $\hat\gamma$ satisfying conditions 1-4 of Theorem \ref{polynomialW}. 
 Although there is not a perfect correspondence between cycles in $\mathcal A$ and cycles in $\mathcal D$, whenever we find in $\mathcal D$ a path $\hat\mu$ (respectively, $\hat\nu$) and a cycle $\hat\gamma$ both ending in the same state $\hat u$ ($\hat v$), we can find in $\mathcal A$ a path $\mu = \hat\mu \hat\gamma^i$  ($\nu = \hat\nu \hat\gamma^j$) and a cycle $\gamma_1 = \hat\gamma^{k_1}$ ($\gamma_2 = \hat\gamma^{k_2})$ both ending in the same state $u$ ($v$), for some $i,j,k_1,k_2 \le n$. 
 Note that the word $\hat\gamma^{(n+1)k_1k_2}$ labels two cycles starting from $u$ and $v$. Moreover, we have 
 \[
 |\hat\gamma^{(n+1)k_1k_2}| \ge (n+1) |\hat\gamma^{k_1k_2}| \ge (n+1) |\hat\gamma| \ge (i+1) |\hat\gamma| > |\hat\mu\hat\gamma^i|,
 \]
 and similarly $|\hat\gamma^{(n+1)k_1k_2}|> |\hat\nu\hat\gamma^j|$. Since $|\hat\gamma| \in O(2^{3n})$, we have $|\hat\gamma^{(n+1)k_1k_2}| \in O(n^3 \cdot 2^{3n})$ and therefore the words $\mu, \nu$ and $\gamma := \hat\gamma^{(n+1)k_1k_2}$ satisfy conditions 2-4 of this corollary. Finally, condition 1 is satisfied for all $i,j \ge 0$ because when reading $\mu\gamma^i$ and $\nu\gamma^j$ on $\mathcal D$, they reach different states.\\ 
 $(\Longleftarrow)$ 
 As we did while proving the opposite direction, given three words $\mu, \nu, \gamma$ in $\mathcal A$ satisfying condition 1-3, we can always find, for some $i,j,k \le 2^n$, three words $\hat\mu = \mu\gamma^i$, $\hat\nu = \nu\gamma^j$ and $\hat\gamma = \gamma^k$ in $\mathcal D$ that almost satisfy condition 1-3 of Theorem \ref{polynomialW}. The only requirement that might be missing is that $\hat\mu \not\equiv_\mathcal L \hat\nu$. 
 The upper bound $2^n$ in condition 1 ensures that this requirement is fulfilled, hence we can apply Theorem \ref{polynomialW} to conclude that $\mathcal L$ is not Wheeler.
\end{proof}

Despite the fact the the bound in condition 4 has become exponential by switching to NFAs, it is still possible to check in polynomial space (but exponential time) whether there are three words $\mu, \nu$ and $\gamma$ satisfying the conditions of Corollary \ref{nfa length}. Thus we can prove the following:

\begin{proposition}
\label{pspace}
Given a NFA $\mathcal A$, deciding whether the language $\mathcal L := \la A$ is Wheeler is PSPACE-complete.
\end{proposition}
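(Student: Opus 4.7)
The plan is to establish PSPACE membership via Corollary~\ref{nfa length}, and then to prove PSPACE-hardness by reducing from a standard PSPACE-complete NFA problem.

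\textbf{Upper bound.} I would describe a nondeterministic polynomial-space algorithm that searches for the triple $\mu,\nu,\gamma$ certifying non-Wheelerness. By Corollary~\ref{nfa length} their lengths are bounded by $O(n^3\cdot 2^{3n})$, so position indices fit in $O(n)$ bits. The algorithm would guess $\mu$ and $\nu$ symbol-by-symbol while maintaining the subsets $\delta(q_0,\mu)\subseteq Q$ and $\delta(q_0,\nu)\subseteq Q$, then nondeterministically commit to states $p\in\delta(q_0,\mu)$ and $r\in\delta(q_0,\nu)$ and guess $\gamma$ symbol-by-symbol, simulating in parallel the sets of states reachable from $p$ (resp.\ $r$) along the prefix of $\gamma$ guessed so far, so as to verify that $\gamma$ eventually closes a cycle at both $p$ and $r$ and that $\gamma\ndashv\mu,\nu$. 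The co-lexicographic constraint from condition~(3) is checked on the fly using only the last symbols of the current guesses. The only nontrivial check is condition~(1), asking that $\mu\gamma^i\not\equiv_{\mathcal L}\nu\gamma^j$ for all $0\le i,j\le 2^n$; but this is the complement of a family of NFA-equivalence instances on left quotients represented by subsets of $Q$, and NFA equivalence is in PSPACE, so by closure of PSPACE under complement and universal quantification over a polynomial-bit index this check can also be done in polynomial space. Since NPSPACE${}={}$PSPACE by Savitch's theorem, the whole procedure runs in polynomial space.

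\textbf{Lower bound.} For hardness I would reduce from NFA universality (or equivalently, non-universality), which is a classical PSPACE-complete problem. Given an NFA $\mathcal B$ over an alphabet $\Delta$, I would build an NFA $\mathcal A'$ over an enlarged alphabet containing $\Delta$ together with a few fresh symbols---playing the role of the letters $a,b,c,f$ in the small gadget of Figures~\ref{w} and~\ref{nw}---so that $\mathcal L(\mathcal A')$ is not Wheeler precisely when $\mathcal L(\mathcal B)$ is not universal. The gadget would use two symmetric branches that must share a common cycle labeled by some word $\gamma$, and would route words of $\Delta^*$ through those branches in such a way that a non-Wheeler witness $\mu\not\equiv_{\mathcal L(\mathcal A')}\nu$ of the form required by Corollary~\ref{nfa length} exists exactly when some word $w\in\Delta^*$ is rejected by $\mathcal B$. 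Lexicographic positioning of the new symbols relative to $\Delta$ would be chosen so that the third condition of the corollary is guaranteed to hold along the constructed branches.

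\textbf{Main obstacle.} Membership is essentially a bookkeeping exercise on top of Corollary~\ref{nfa length}; the delicate part is the hardness reduction. Designing the gadget so that the \emph{only} possible obstruction to Wheelerness comes from a rejected word of $\mathcal B$ is the critical step: one has to ensure that no spurious pair $(\mu,\nu,\gamma)$ accidentally satisfies conditions~(1)--(3) of Corollary~\ref{nfa length} for reasons unrelated to $\mathcal B$, and conversely that a single rejected word of $\mathcal B$ actually lifts to a non-Wheeler triple of bounded length. Controlling the Myhill-Nerode classes of the branch prefixes and the co-lexicographic ordering of fresh symbols simultaneously is where the argument will require the most care.
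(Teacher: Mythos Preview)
Your plan matches the paper's approach on both fronts: PSPACE membership via Corollary~\ref{nfa length} together with the PSPACE decidability of NFA equivalence, and PSPACE-hardness via a reduction from NFA universality using two branches distinguished by fresh letters that feed into a common cycle.

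Two implementation points where the paper differs and where your sketch is slightly loose. First, for the upper bound you propose guessing $\mu,\nu,\gamma$ left-to-right while maintaining forward images; but condition~(3) is a \emph{co-lexicographic} comparison, which is determined by the rightmost mismatch, so ``checking on the fly using only the last symbols of the current guesses'' does not work as stated when the words have exponential length and cannot be stored. The paper resolves this by first guessing the three lengths (each fits in $O(n)$ bits) and then guessing the characters \emph{right-to-left}, which makes the co-lex comparison and the suffix test $\gamma\ndashv\mu,\nu$ immediate, while computing backward-reachability sets $Q_{\alpha,q}$ to recover $\delta(q_0,\mu),\delta(q_0,\nu)$ and the cycle conditions. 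Second, for hardness the paper's concrete gadget is $\mathcal L''=a\cdot(\mathcal Lc)^*\cdot\mathcal L+b\cdot(\Sigma+c)^*$ with $a\prec b\prec c$ fresh; the non-Wheeler direction is argued via Lemma~\ref{monotone} using the monotone sequence $ac\prec bc\prec ac^2\prec bc^2\prec\cdots$, which sidesteps having to exclude ``spurious'' triples in Corollary~\ref{nfa length}---the obstacle you correctly flag as the delicate part of your plan.
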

\begin{proof}[Sketch]
To prove that the problem is in PSPACE (=NPSPACE), we guess (using non-determinism) three words $\mu, \nu$ and $\gamma$ satisfying the conditions of Corollary \ref{nfa length}. Since this words are too long to be stored, we first guess their length bit by bit and then we guess their characters one by one, starting from the last one and proceeding from right to left. Meanwhile, for all $\alpha \in \{\mu, \nu, \gamma\}$ and for all state $t$ of $\mathcal A$, we follow backwards the edges of $\mathcal A$ labeled as the character of $\alpha$ that we are guessing. This way, we can compute the set of states from which is possible to reach $t$ by reading $\alpha$. This allows us to find, for all $i,j \le 2^n$, the sets $Q^i_{\mu} := \delta(q_0, \mu\gamma^i)$ and $Q^j_\nu := \delta(q_0, \nu\gamma^j)$. We can then test whether $\mu\gamma^i \not\equiv_\mathcal L \nu\gamma^j$ by confronting the language accepted by the automaton $\mathcal A$ with set of initial states $Q^i_{\mu}$ and the language accepted by the automaton $\mathcal A$ with set of initial states $Q^i_{\nu}$. This can be done in polynomial space since the problem of deciding whether two NFAs recognize the same language is known to be PSPACE-complete.
\vskip 5mm
To prove the hardness of the problem, we show a polynomial reduction from the universality problem for NFA, i.e. the problem of deciding whether the language accepted by a NFA $\mathcal A$ over the alphabet $\Sigma$ is such that $\la A = \Sigma^*$.  
\\Let $\mathcal A = (Q, q_0,\delta, F, \Sigma)$ be a NFA and let $\mathcal L := \la A$. We can assume, without loss of generality, that $q_0 \in F$, otherwise $\mathcal A$ would not accept the empty word and we could immediately derive that $\mathcal L \ne \Sigma^*$. Starting from $\mathcal A$, we build in constant time a NFA $\mathcal A''$ over the alphabet $\Sigma \cup \{a,b,c\}$ that recognizes the language 
\[\mathcal L'' := a\cdot(\mathcal Lc)^*\cdot\mathcal L + b\cdot (\Sigma + c)^*,
\]
where $a,b,c$ are three characters not in $\Sigma$ and such that $a \prec b \prec c$ (the order of the characters of $\Sigma$ is irrelevant in this proof). If we prove that $\mathcal L = \Sigma^*$ if and only if $\mathcal L''$ is Wheeler, the reduction is complete and the thesis follows. Notice that, since $\varepsilon \in \mathcal L$, the following property holds:
\begin{equation}
\label{eq:1}
\mathcal L = \Sigma^* \iff (\mathcal Lc)^*\cdot\mathcal L = (\Sigma+c)^*.
\end{equation}
Let us prove that $\mathcal L = \Sigma^*$ implies that $\mathcal L''$ is not Wheeler. If $\mathcal L = \Sigma^*$, then by \eqref{eq:1} we have $\mathcal L'' = (a+b)(\Sigma+c)^*$. The minimum DFA recognizing $\mathcal L''$ has only one cycle, therefore from Theorem \ref{polynomialW} it follows that $\mathcal L''$ is Wheeler.\\
We next prove that $\mathcal L''$ not Wheeler implies $\mathcal L=\Sigma^*$. Notice that $a^{-1}\mathcal L''=(\mathcal Lc)^*\cdot\mathcal L $ so that,  by \eqref{eq:1}
if  $\mathcal L \ne \Sigma^*$, then $a^{-1}\mathcal L''=(\mathcal Lc)^*\cdot\mathcal L \ne (\Sigma+c)^*$. However we have $b^{-1}\mathcal L'' = (\Sigma+c)^*$, thus $a \not\equiv_\mathcal{L''} b$. Moreover, it can be proved that, for all $n$, both $a \equiv_\mathcal{L''} ac^n$ and $b \equiv_\mathcal{L''} bc^n$ hold. Therefore the monotone sequence 
\[
ac \prec bc \prec ac^2 \prec bc^2 \prec \dots \prec ac^n \prec bc^n \prec \dots
\]
is not constant modulo $\equiv_\mathcal{L''}$ and from Lemma \ref{monotone} it follows that $\la{A''}$ is not Wheeler.
\end{proof}

\section{Generalized Wheelerness}
\label{secGW}

As we have already pointed out in the introduction, changing the underlying order of the alphabet might turn a Wheeler language into a not Wheeler one and vice versa. For instance, consider again the Wheeler languages $\mathcal L_d$ and the regular (but not Wheeler) language $\mathcal L_b$ depicted in Figure \ref{exth}. 
If we change the order of $\Sigma$ from $a \prec c \prec d \prec f$ to $a \prec d \prec c \prec f$, the Wheeler language $\mathcal L_d$ turns into a non-Wheeler language (isomorphic to $\mathcal L_b$ under the isomorphism $\varphi$ between alphabets that fixes characters $a,c,f$ and sends $d$ into $\varphi(d) = b$). Hence, by not fixing  an apriori order of the alphabet $\Sigma$ we enlarge the class of languages.

\begin{definition}[Generalized Wheelerness]
\label{defGW}
A NFA $\mathcal A$ over the alphabet $\Sigma$ is called a Generalized Wheeler Automaton (GWNFA) if and only if there exists an ordering of the elements of $\Sigma$ that makes $\mathcal A$ Wheeler.
\\A language $\mathcal L$ is called \emph{generalized Wheeler} (for short GW) if and only if there exists a GWNFA that recognizes $\mathcal L$.
\end{definition}

Let $\mathcal A$ be a WDFA. Then, every word $\alpha$ that labels a cycle in $\mathcal A$ is \emph{primitive} (see \cite{ADPP2}), that is there exists no $\beta \ne \varepsilon$ and $i>1$ such that $\alpha = \beta^i$. A direct consequence of this property is that Wheeler languages form a subclass of star-free languages, i.e. the class of languages that can be defined by a regular expression not containing the Kleene star. Since star-free expressions, and thus star-free languages, are closed under permutations of the alphabet, even GW languages must be a subclass of star-free languages. Here we show that the inclusion is strict, therefore GW languages must be studied  separately.

\begin{proposition}
If $|\Sigma| \ge 2$, then the set $\{ \mathcal L \subseteq \Sigma^*: \; \mathcal L \text{ is GW} \}$ is a proper subset of $\{ \mathcal L \subseteq \Sigma^*: \; \mathcal L \text{ is star-free} \}$.
\end{proposition}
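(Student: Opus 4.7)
The plan is to exhibit a star-free language that is not GW; combined with the inclusion GW $\subseteq$ star-free noted just above, this yields the proper containment. I propose the language of nonempty words starting and ending with the same letter: fixing two distinct letters $a, b \in \Sigma$, set
\[
\mathcal L \;=\; \{a, b\} \,\cup\, a\Sigma^* a \,\cup\, b\Sigma^* b.
\]
Star-freeness of $\mathcal L$ is immediate, since $\Sigma^* = \overline{\emptyset}$ is star-free, singletons are star-free, and the star-free class is closed under concatenation and boolean operations.

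To show that $\mathcal L$ is not GW I invoke Lemma \ref{monotone}: it suffices to find, for every total order on $\Sigma$, a monotone sequence in $(\pf L, \prec)$ that does not stabilize modulo $\equiv_\mathcal L$. Since $\mathcal L$ is symmetric under swapping $a$ and $b$, one may assume $a \prec b$. The two relevant Myhill-Nerode classes in $\pf L$ are:
\begin{itemize}
    \item $C^+_a$: prefixes starting and ending with $a$ (so in $\mathcal L$), with right context $\{\varepsilon\} \cup \Sigma^* a$;
    \item $C^-_b$: prefixes starting with $b$ and ending with $a$ (so \emph{not} in $\mathcal L$), with right context $\Sigma^* b$.
\end{itemize}
These classes are distinct: $\varepsilon$ lies in the right context of the former but not of the latter, and their tail letters differ.

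Next I define $(\alpha_n)_{n\ge 1}$ recursively by $\alpha_1 = a$ and $\alpha_{n+1} = c_n \alpha_n$, where $c_n = a$ when $n$ is odd and $c_n = b$ when $n$ is even, giving
\[
\alpha_1 = a,\quad \alpha_2 = aa,\quad \alpha_3 = baa,\quad \alpha_4 = abaa,\quad \alpha_5 = babaa,\quad \alpha_6 = ababaa,\ \dots.
\]
Because $\alpha_{n+1}^R = \alpha_n^R c_n$, each reverse is a proper prefix of the next, so $\alpha_n \prec \alpha_{n+1}$ in the co-lexicographic order and the sequence is strictly increasing in $(\pf L, \prec)$. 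For $n \ge 3$ the first letter of $\alpha_n$ alternates between $b$ (odd $n$) and $a$ (even $n$), while the last letter is always $a$; hence the class of $\alpha_n$ alternates between $C^-_b$ and $C^+_a$ and never stabilizes. Lemma \ref{monotone} then yields that $\mathcal L$ is not Wheeler under this order, and by the $a \leftrightarrow b$ symmetry, not Wheeler under any order on $\Sigma$; so $\mathcal L$ is not GW.

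The only real obstacle is double-checking the co-lex monotonicity and the class separation for this specific sequence; both reduce to short direct computations from the definitions, after which the symmetry observation and Lemma \ref{monotone} close the argument.
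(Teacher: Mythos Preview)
Your proof is correct and follows essentially the same strategy as the paper: exhibit a concrete star-free language, build a co-lex monotone sequence by successively \emph{prepending} letters (so that each $\alpha_n$ is a suffix of $\alpha_{n+1}$, making the sequence increasing regardless of how $\Sigma$ is ordered), and observe that the sequence alternates between two distinct $\equiv_{\mathcal L}$-classes, whence Lemma~\ref{monotone} applies for every alphabet order. The paper uses $\mathcal L=a(aba)^*a+ba(aba)^*b$ with the sequence $a(aba)^n,\,ba(aba)^n$; your choice $\mathcal L=\{a,b\}\cup a\Sigma^*a\cup b\Sigma^*b$ is a different witness but the mechanism is identical. Two minor remarks: your WLOG reduction ``assume $a\prec b$'' is harmless but actually unnecessary, since neither the suffix-based monotonicity nor the class alternation depends on the order; and in your justification that $C_a^+\ne C_b^-$, the decisive fact is the one you state first (that $\varepsilon$ lies in one right context and not the other), which already suffices.
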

\begin{proof}
Let $a,b$ be two distinct characters of $\Sigma$, and consider the language $\mathcal L=a(aba)^*a+ba(aba)^*b$. It is possible to prove that $\mathcal L$ is star-free, but $\mathcal L$ is not GW: consider the sequence $(\alpha_i)_{i\ge 2}$ with $\alpha_{2n}=a(aba)^n$ and $\alpha_{2n+1}=ba(aba)^n$. Since, for all $i$, the word $\alpha_i$ is a prefix of $\alpha_{i+1}$, independently from how $a$ and $b$ are ordered we have 
\[
aaba \prec baaba \prec aabaaba \prec baabaaba \prec \dots \prec a(aba)^i \prec ba(aba)^i \prec \dots
\]
Moreover, for all $i$ we have $a(aba)^i \not\equiv_\mathcal L ba(aba)^i$, since $a(aba)^i \cdot a$ belongs to $\mathcal L$ but $ba(aba)^i \cdot a$ does not. We can then apply Lemma \ref{monotone} to conclude that $\mathcal L$ is not Wheeler. Since this result does not depend on the order of the alphabet, $\mathcal L$ is not GW.
\end{proof}

We mentioned that we can decide in polynomial time whether a DFA is Wheeler. On the contrary, deciding whether a NFA is Wheeler is NP-complete, even when we bound the outdegree of each state of the NFA to be at most 5 (see \cite{NP}). 
As one may expect, deciding whether a NFA is a GWNFA is not easier than deciding whether a NFA is Wheeler. In fact, we show in Proposition \ref{W to GW} that the problem is NP-complete. 
We actually prove a stronger result in Proposition \ref{GWDFA}: even deciding whether a DFA is a GWNFA is NP-complete. 
Since the proof of Proposition \ref{GWDFA} is cumbersome, we decided to present also it weaker version, i.e. Proposition \ref{W to GW}, which shows a more natural reduction from the problem of deciding whether a NFA is Wheeler. Both proofs can be found in the Appendix.
It is worth noticing that the proof of Proposition \ref{W to GW} can be adapted to work even on DFAs, hence giving an alternative way to prove that deciding whether a DFA is a GWNFA is NP-complete. Nonetheless, Proposition \ref{GWDFA} is still stronger, since it also proves that deciding whether a DFA recognizes a GW language is NP-complete.

\begin{proposition}[GWNFA hardness]
\label{W to GW}
Let $\mathcal A$ be a NFA. Deciding whether $\mathcal A$ is a GWNFA is NP-complete.
\end{proposition}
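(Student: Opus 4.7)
The plan is to nondeterministically guess a linear order $\prec$ on $\Sigma$ together with a linear order $<$ on $Q$ (both of polynomial bit-length) and then verify, edge-by-edge in polynomial time, that the two Wheeler conditions of Definition \ref{WheelerAutomaton} hold.

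\textbf{NP-hardness via reduction from NFA Wheelerness.} I would reduce from the NP-complete problem (see \cite{NP}) of deciding whether a NFA is Wheeler under a \emph{fixed} alphabet order. Given such an input $\mathcal A = (Q, q_0, \delta, F, \Sigma)$ with $\Sigma = \{a_1 \prec \cdots \prec a_k\}$, the plan is to build in polynomial time a NFA $\mathcal A'$ over the enlarged alphabet $\Sigma \cup \{\#_1, \dots, \#_{k-1}\}$ (fresh auxiliary letters) such that $\mathcal A'$ is a GWNFA iff $\mathcal A$ is Wheeler under the given order. For each $i \in \{1, \dots, k-1\}$ I attach to $\mathcal A$ an independent \emph{order-forcing gadget} $G_i$ consisting of four fresh states $u_i^1, u_i^2, v_i^1, v_i^2$ and the edges
\[
q_0 \xrightarrow{a_i} u_i^1,\quad q_0 \xrightarrow{a_{i+1}} u_i^2,\quad q_0 \xrightarrow{\#_i} v_i^1,\quad u_i^1 \xrightarrow{\#_i} v_i^1,\quad u_i^2 \xrightarrow{\#_i} v_i^2.
\]

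\textbf{Why each $G_i$ forces $a_i \prec a_{i+1}$.} The key argument is a crossing application of Wheeler condition (ii) on the three $\#_i$-edges. The pair $q_0 \xrightarrow{\#_i} v_i^1$ and $u_i^2 \xrightarrow{\#_i} v_i^2$ gives $v_i^1 \le v_i^2$ (since $q_0 < u_i^2$ always). If instead $u_i^2 < u_i^1$ held, applying condition (ii) to $u_i^1 \xrightarrow{\#_i} v_i^1$ and $u_i^2 \xrightarrow{\#_i} v_i^2$ would yield $v_i^2 \le v_i^1$, forcing $v_i^1 = v_i^2$---a contradiction, since they are distinct fresh states. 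Hence any Wheeler order of $\mathcal A'$ must satisfy $u_i^1 < u_i^2$; condition (i) applied to the two edges out of $q_0$ labeled $a_i$ and $a_{i+1}$ then yields $a_i \prec a_{i+1}$.

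\textbf{Correctness and the delicate point.} Combining all $k-1$ gadgets forces any witnessing alphabet order of $\mathcal A'$ to restrict on $\Sigma$ to exactly $a_1 \prec \cdots \prec a_k$, so its state order restricts to a Wheeler order of the subautomaton $\mathcal A$ under this alphabet order. Conversely, given a Wheeler order of $\mathcal A$, I would extend the alphabet by appending $\#_1 \prec \cdots \prec \#_{k-1}$ at the end, place $u_i^1$ and $u_{i-1}^2$ at the start of each corresponding $Q_{a_i}$-class, and order $v_i^1 < v_i^2$ inside $Q_{\#_i}$; the Wheeler conditions can be checked routinely. The main obstacle is verifying that placing the $k-1$ gadgets in parallel does not create spurious constraints on the $\mathcal A$-states; this is exactly why I use a \emph{distinct} fresh letter $\#_i$ per gadget, which keeps each $Q_{\#_i}$-class isolated from the others and confines the effect of each $G_i$ to forcing a single inequality $a_i \prec a_{i+1}$.
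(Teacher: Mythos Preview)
Your proposal is correct and follows the same high-level strategy as the paper: NP membership by guessing both the alphabet order and the state order, and NP-hardness by reducing from the fixed-order NFA Wheelerness problem of \cite{NP}, attaching one ``order-forcing'' gadget per consecutive pair $(a_i,a_{i+1})$ using a fresh auxiliary letter.

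The difference lies in the gadget and in the style of the forcing argument. The paper's gadget $G_i$ has seven states and two $3$-cycles labelled $x_i a_i x_i$, plus two distinguishing sink letters $e,f$; the forcing argument is language-theoretic, invoking Theorem~\ref{polynomialW} to show that $a_{i+1}\prec' a_i$ would make $\mathcal L(\mathcal A')$ non-Wheeler. Your gadget has only four states and five edges, and your forcing argument is purely structural: a crossing application of condition~(ii) on the three $\#_i$-edges pins down $u_i^1<u_i^2$, and then the contrapositive of condition~(i) gives $a_i\prec' a_{i+1}$. Your route is more elementary (it does not need Theorem~\ref{polynomialW} at all) and yields a smaller blow-up. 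The paper's cycle-based gadget, on the other hand, is the one that can be reused verbatim for the \emph{language} version (Proposition~\ref{GWDFA}), which your acyclic gadget cannot handle. Two minor points to tidy up: state explicitly that $v_i^1,v_i^2$ are final (so that $\mathcal A'$ stays basic), and note that your final inference ``condition~(i) \dots yields $a_i\prec a_{i+1}$'' is really the contrapositive of~(i) applied with the roles of the two edges swapped.
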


\begin{proposition}[GWDFA and GW languages hardness]
\label{GWDFA}
Let $\mathcal L\subseteq \Sigma^*$ be a language and $\mathcal A$ be a DFA. Both the problems of deciding whether $\mathcal A$ is a GWNFA and deciding whether $\mathcal L$ is GW are NP-complete.
\end{proposition}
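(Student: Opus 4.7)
The plan is to show NP-membership and then NP-hardness by reduction, handling the GWNFA and the GW-language problems together.

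For NP-membership, a polynomial-size certificate for both problems is a total order $\prec$ on $\Sigma$. Given $\prec$, the Wheelerness of the DFA $\mathcal A$ is checked in polynomial time by direct inspection of conditions (i)--(ii) of Definition \ref{WheelerAutomaton}, and the Wheelerness of $\la A$ is checked in polynomial time by Theorem \ref{polynomialW}. Both problems therefore lie in NP.

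For NP-hardness, my plan is to reduce from the Betweenness problem: given a base set $V$ and triples $(a_i,b_i,c_i)_{i\le m}$, decide whether there is a linear order on $V$ placing each $b_i$ between $a_i$ and $c_i$. The key building block, extending Figure \ref{exth}, is the gadget language $\mathcal L_{a,b,c} := ab^* + cb^*f$ (with $f$ fresh), which by Lemma \ref{monotone} is Wheeler under a given alphabet order iff $b$ lies between $a$ and $c$: violation of this betweenness yields, by direct co-lexicographic calculation, an infinite monotone sequence of prefixes interleaving $ab^n$ and $cb^n$ whose two subsequences have distinct Myhill--Nerode right contexts $b^*$ and $b^*f$. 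The easy direction of the reduction then consists in taking a satisfying order of $V$, extending it to $\Sigma$, and gluing the small WDFAs for each gadget at a shared initial state to obtain a WDFA for the combined language.

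The main obstacle, and the reason the proof is cumbersome, is that naively combining gadgets that reuse characters from $V$ can introduce new cross-gadget monotone non-constant sequences. For instance, two gadgets $\mathcal L_{a,b,c}$ and $\mathcal L_{a,b,d}$ sharing the loop character $b$ admit, under any order $a \prec b \prec c \prec d$ (in which both betweenness constraints hold), an alternating monotone sequence $\ldots \prec db^{n+1} \prec cb^n \prec db^n \prec \ldots$ whose two subsequences lie in distinct $\equiv_\mathcal L$-classes, making the combined language non-Wheeler. My plan to overcome this is to \emph{isolate} the gadgets by giving each its own fresh per-gadget copies of the relevant $V$-characters so that no two gadgets share any loop or branch letter, and then to add auxiliary \emph{linking gadgets} that force the per-gadget copies of each $v \in V$ to occupy the same relative position as $v$ in any GW ordering of $\Sigma$; this preserves the original betweenness constraints on $V$ while making the gadgets co-lexicographically non-interacting.

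Correctness must be argued at the language level --- this is the step that distinguishes the harder GW-language statement of Proposition \ref{GWDFA} from the GWNFA part, which already follows by adapting the proof of Proposition \ref{W to GW}. If some $(a_i,b_i,c_i)$ is violated by the order restricted to $V$, the corresponding gadget supplies an infinite monotone sequence in $\pf L$ whose two subsequences sit in distinct $\equiv_\mathcal L$-classes (distinguished by the gadget's unique marker), so by Lemma \ref{monotone} the language $\mathcal L = \la A$ is not Wheeler, independently of which DFA one uses to represent it. Conversely, if all constraints hold, a case analysis on the form of prefixes in the combined language --- exploiting the isolation provided by the fresh per-gadget characters and the consistency enforced by the linking gadgets --- shows that every monotone sequence in $\pf L$ is eventually constant modulo $\equiv_\mathcal L$, so $\mathcal L$ is GW. Together with NP-membership, this gives NP-completeness for both problems of Proposition \ref{GWDFA}.
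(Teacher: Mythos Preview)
Your high-level plan matches the paper's: NP-membership by guessing the alphabet order, and NP-hardness by a reduction from \textsc{Betweenness}. Your basic gadget $\mathcal L_{a,b,c}=ab^*+cb^*f$ is also correct in isolation: it is Wheeler exactly when $b$ lies between $a$ and $c$.

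The gap is in the combination step. You correctly identify that naively gluing gadgets that share $V$-letters breaks, and you propose to fix this with per-gadget fresh copies of the $V$-letters together with ``linking gadgets'' that force every copy $v^{(i)}$ to sit in the same relative position as $v$. But you never say what these linking gadgets are, and it is far from clear they can be built. If you replace $a_i,b_i,c_i$ by fresh copies, each gadget constraint becomes ``$b_i^{(i)}$ between $a_i^{(i)}$ and $c_i^{(i)}$'' on \emph{independent} fresh symbols, which is always satisfiable; the whole hardness of the instance now rests on the linking gadgets, and you give no construction for them. Enforcing ``$a^{(i)}\prec b^{(j)}$ iff $a\prec b$'' is a biconditional on the unknown order, not a betweenness constraint, and I do not see how to encode it with a Wheeler/non-Wheeler dichotomy of the kind your gadget provides. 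Without this, the reduction does not go through.

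The paper avoids this difficulty entirely by designing a gadget that keeps the \emph{original} letters $a_i,b_i,c_i\in V$ and introduces only a single fresh per-triple separator $x_i$. The cycle is $x_i\,b_i\,x_i$ and the two entry words are $a_i x_i$ and $c_i x_i$. Because every word in gadget $i$ has $x_i$ in its last two positions, gadgets with $i\neq i'$ never interleave co-lexicographically, so no cross-gadget monotone sequence can arise; yet the letters $a_i,b_i,c_i$ that determine the betweenness condition are the original $V$-letters, so no linking is needed. Applying Theorem~\ref{polynomialW} with $\mu=a_ix_i$, $\nu=c_ix_i$, $\gamma=x_ib_ix_i$ then gives non-Wheelerness of the \emph{language} precisely when $b_i$ is not between $a_i$ and $c_i$, which handles both the GWNFA and the GW-language statements at once. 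If you want to salvage your approach, the cleanest fix is to adopt this ``sandwich the $V$-letters between fresh $x_i$'s'' idea rather than the copies-plus-linking scheme.
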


\section{DFA to WDFA}

As discussed in Section \ref{intro}, a Wheeler automaton can be represented more compactly than a generic finite automaton. Therefore, if we are given a Wheeler language $\mathcal L$ represented as a DFA $\mathcal A$ that recognizes $\mathcal L$, we may be tempted to look for a WDFA $\mathcal A'$ that recognizes the same language to achieve a better, i.e. more compact, representation. 
Unfortunately, this approach might not work in our favor: in \cite{ADPP}, a family $(\mathcal L_m)_{m\ge1}$ of Wheeler languages with the property that the dimension of the minimum WDFA recognizing $\mathcal L_m$ is exponential in the dimension of the minimum DFA recognizing $\mathcal L_m$ is presented. 
Note that we can always assume that, whenever we are given a DFA or a WDFA, the automaton is minimum. In fact, both tasks of minimizing a DFA and minimizing a WDFA can be done in polynomial time (see \cite{Hop} for DFAs and \cite{ADPP} for WDFAs).
Here we answer (positively) to the open question put forward in \cite{ADPP} whether there exists an algorithm to compute the minimum WDFA starting from the minimum DFA. Our algorithm works in exponential time in the worst case and, since the dimension of the output might be exponential in the dimension of the input, is optimal for the task. 

Let $\mathcal L = \la A$ be the language recognized by the given (minimum) DFA $\mathcal A$. Recall that there is a 1 to 1 correspondence between the $\equiv_\mathcal L^c$-classes and the states of the minimum WDFA recognizing $\mathcal L$. 
First of all, our algorithm identifies a representative for each $\equiv_\mathcal L^c$-class; to be able do this in exponential time, we first need to put a bound on the length of such representatives.

\begin{lemma}
\label{short}
Let $\mathcal A$ be the minimum DFA recognizing the Wheeler language $\mathcal L = \la A$ over the alphabet $\Sigma$, and let $C_1,...,C_m$ be the pairwise distinct equivalence classes of $\equiv_\mathcal L^c$. Then, for each $1 \le i \le m$, there exists a word $\alpha_i \in C_i$ such that $|\alpha_i| < n + n^2$, where $n := |\mathcal A|$.
\end{lemma}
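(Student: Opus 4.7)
The plan is to begin with an arbitrary word of $C_i$ and iteratively shorten it while staying in $C_i$, until its length drops below $n+n^2$. The engine of the argument is the following \emph{suffix-preservation claim}: if $\alpha=\mu\beta$ and $\alpha'=\mu'\beta$ are words of $\pf L$ that both reach the state $q$ of $\mathcal A$ and $|\beta|\ge n^2$, then $\alpha\equiv_\mathcal L^c\alpha'$. Granting this, the lemma follows: for $\alpha\in C_i$ with $|\alpha|\ge n+n^2$, split $\alpha=\mu\beta$ with $|\beta|=n^2$; since $|\mu|\ge n$, the $\mathcal A$-run of $\mu$ visits at least $n+1$ state-positions and by pigeonhole repeats a state, so I remove the resulting loop from $\mu$ to obtain $\mu'$ with $|\mu'|<|\mu|$. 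Then $\alpha':=\mu'\beta$ still reaches $q$ and shares the suffix $\beta$ with $\alpha$, so the claim places $\alpha'$ in $C_i$; iterating terminates at a representative of length $<n+n^2$.

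To prove the claim, conditions (a) and (b) of $\equiv_\mathcal L^c$ are immediate (same endpoint state, same last character). For condition (c), I first observe that any $\gamma\in\pf L$ strictly co-lex between $\alpha$ and $\alpha'$ must end with $\beta$: the reverses of $\alpha$ and $\alpha'$ share their first $|\beta|$ characters, namely $\overleftarrow\beta$, so any $\overleftarrow\gamma$ lex between them must also begin with $\overleftarrow\beta$. Writing $\gamma=\nu\beta$ and supposing for contradiction that $p:=\delta(q_0,\gamma)\ne q$, I would obtain a triple of words witnessing non-Wheelerness via Theorem~\ref{polynomialW}, contradicting the hypothesis.

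The triple arises from a pigeonhole on state-pairs along the parallel $\beta$-runs. Writing $\beta=b_1\cdots b_{n^2}$ and setting $r_i:=\delta(\delta(q_0,\mu),b_1\cdots b_i)$ and $r'_i:=\delta(\delta(q_0,\nu),b_1\cdots b_i)$, determinism together with $r_{n^2}=q\ne p=r'_{n^2}$ forces $r_i\ne r'_i$ for every $i$; the $n^2+1$ distinct-component pairs $(r_i,r'_i)$ draw from only $n(n-1)<n^2+1$ possible values, so there exist $0\le s<t\le n^2$ with $(r_s,r'_s)=(r_t,r'_t)$. Then $\widehat\gamma:=b_{s+1}\cdots b_t$ is a common cycle at the distinct states $r_s,r'_s$, and the paths $\widehat\mu:=\mu\,b_1\cdots b_s$, $\widehat\nu:=\nu\,b_1\cdots b_s$ satisfy $\widehat\mu\not\equiv_\mathcal L\widehat\nu$.

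The main obstacle I anticipate is discharging the remaining hypotheses of Theorem~\ref{polynomialW} for this triple: the non-suffix condition $\widehat\gamma\ndashv\widehat\mu,\widehat\nu$ and the co-lex condition that $\widehat\mu,\widehat\nu$ both lie on the same side of $\widehat\gamma$. The first I would force by replacing $\widehat\gamma$ with a sufficiently large power $\widehat\gamma^K$, which remains a common cycle at $r_s,r'_s$ and exceeds $|\widehat\mu|,|\widehat\nu|$ in length. The second exploits the shared suffix $b_1\cdots b_s$ of $\widehat\mu$ and $\widehat\nu$: whenever the lex comparison of their reverses with $\overleftarrow{\widehat\gamma^K}$ is decided inside this shared block, both align the same way relative to $\widehat\gamma^K$. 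Degenerate periodic cases, where $\widehat\gamma$ happens to embed in the $b_1\cdots b_s$ suffix of $\widehat\mu,\widehat\nu$, can be handled by selecting a different witnessing pair $(s,t)$ among the many produced by pigeonhole; this case analysis is the technical heart of the argument.
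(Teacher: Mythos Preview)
Your overall architecture matches the paper's: remove a loop in the length-$n$ prefix, observe the long common suffix, pigeonhole on state pairs along the suffix to find a common cycle $\widehat\gamma$ at distinct states, then invoke Theorem~\ref{polynomialW}. The gap is in the last step, specifically in discharging condition~3 (the ``same side'' co-lex requirement). Your proposal to ``select a different witnessing pair $(s,t)$'' when $b_1\cdots b_s$ happens to be a suffix of $\widehat\gamma^K$ is not a working plan: pigeonhole only guarantees \emph{one} repeated pair, and even when several exist there is no reason any of them avoids the periodic coincidence. So as stated, the argument does not close.

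The fix is much simpler than a case analysis, and you already have the ingredient. You have \emph{three} words with the suffix $b_1\cdots b_s$: $\mu\,b_1\cdots b_s$ and $\mu'\,b_1\cdots b_s$ both reach $r_s$, while $\nu\,b_1\cdots b_s$ reaches $r'_s$. Because $\gamma$ lies co-lex between $\alpha$ and $\alpha'$ and all three share the suffix $b_{s+1}\cdots b_{n^2}$, chopping that suffix preserves the ordering, so $\nu\,b_1\cdots b_s$ lies strictly between $\mu\,b_1\cdots b_s$ and $\mu'\,b_1\cdots b_s$. Now take $\widehat\mu:=\nu\,b_1\cdots b_s$ (the middle word, going to $r'_s$) and simply \emph{choose} $\widehat\nu$ to be whichever of $\mu\,b_1\cdots b_s$, $\mu'\,b_1\cdots b_s$ lies on the same side of $\widehat\gamma^K$ as $\widehat\mu$; one of them always does, because $\widehat\mu$ is sandwiched between them. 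This is exactly the trick the paper uses, and it eliminates the need for any periodic case analysis.
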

\begin{proof}[Sketch] Suppose by contradiction that there exists a class $C_i$ such that for all $\alpha \in C_i$ it holds $|\alpha| \ge n + n^2$, and let $\alpha \in C_i$ be a word of minimum length. Since $\mathcal A$ has $n$ states, there must exists a factor of $\alpha$ that corresponds to a cycle in the run of $\alpha$ on $\mathcal A$. Let $\beta$ be the word obtained by erasing such factor from $\alpha$. Since $\alpha$ and $\beta$ ends in the same state of $\mathcal A$, we have $\alpha \equiv_\mathcal L \beta$. From the minimality of $\alpha$ it follows that $\alpha \not\equiv^c_\mathcal L \beta$, therefore there must exists a word $\eta$ that is not Myhill-Nerode equivalent to $\alpha$ and that is included (co-lexicographically) between $\alpha$ and $\beta$. We can further assume that the length of $\eta$ is greater than $n^2$, hence we can identify a common factor of $\alpha$ and $\eta$ that labels two cycles in the runs of $\alpha$ and $\eta$ on $\mathcal A$. We can then apply Theorem \ref{polynomialW} to conclude that $\mathcal L$ is not Wheeler, a contradiction.
\end{proof}

The algorithm works as follow: first of all, it generates the list containing every word of length less than $n+n^2$ that can be read on $\mathcal A$. This can be done in exponential time since the list contains at most $|\Sigma|^{n+n^2}$ words. 
Then, we order the list co-lexicographically and we apply the following proposition.

\begin{proposition}[Minimum DFA to minimum WDFA]
\label{DFA to WDFA}
Let $\mathcal A$ be the minimum automaton recognizing $\mathcal L = \la A$ with $| \mathcal A | = n$, over the alphabet $\Sigma$ with $|\Sigma|=\sigma$. Let $d:= n + n^2$ and define 
\[ \pf L^{\le d} := \{\alpha \in \pf L: \; |\alpha| \le d \}. \] 
Assume that we are given the elements of $\pf L^{\le d}$ in co-lexicographic order, i.e. $\pf L^{\le d} = \{ \alpha_1, ..., \alpha_k \}$ with $\alpha_i \prec \alpha_{i+1}$. Then it is possible to build the minimun WDFA recognizing $\mathcal L$ in $O(k+n^2\cdot\sigma\cdot m\log m)$ time, where $m$ is the dimension of the output.
\end{proposition}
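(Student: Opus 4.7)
The plan is to use Lemma~\ref{short} to reduce the construction to the short words in $\pf L^{\le d}$ and to exploit the interval (convexity) structure of the $\equiv_\mathcal L^c$-classes. In a first phase, I scan the sorted list $\alpha_1,\dots,\alpha_k$ and record for each $\alpha_i$ the pair $(q_i,c_i)$, where $q_i := \delta(q_0,\alpha_i)$ is the state reached in $\mathcal A$ and $c_i$ is the last letter of $\alpha_i$; I then partition the list into maximal runs of consecutive indices sharing the same pair $(q_i,c_i)$. I claim each such run equals the intersection of some $\equiv_\mathcal L^c$-class with $\pf L^{\le d}$, and that every class arises from exactly one run.

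For correctness, any two consecutive entries $\alpha_i, \alpha_{i+1}$ sharing $(q,c)$ are $\equiv_\mathcal L^c$-equivalent: they already share a Myhill--Nerode class and a final letter, and if some $\gamma \in \pf L$ with $\alpha_i \prec \gamma \prec \alpha_{i+1}$ broke the convexity condition, then the $\equiv_\mathcal L^c$-class of $\gamma$ would be an interval strictly contained in $(\alpha_i, \alpha_{i+1})$ and, by Lemma~\ref{short}, would contribute a representative in $\pf L^{\le d}$ between $\alpha_i$ and $\alpha_{i+1}$, contradicting consecutiveness. Run boundaries correspond to a change of $(q,c)$ and hence of $\equiv_\mathcal L^c$-class, while Lemma~\ref{short} ensures every class contributes at least one run. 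So the runs enumerate exactly the $m$ classes, and a state $C_j$ is final precisely when $q_j \in F$.

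From each run I pick a representative $\beta_j$ of length $<d$ (guaranteed by Lemma~\ref{short}); the resulting list is already co-lex sorted as $\beta_1 \prec \dots \prec \beta_m$ and its order matches the order induced on the classes as intervals. To compute $\delta_W(C_j, a)$ I first check $\beta_j a \in \pf L$ with one step of $\mathcal A$; since $|\beta_j a| \le d$, the word $\beta_j a$ lies in some class $C_r$. I locate $r$ by binary searching $\beta_j a$ in $\beta_1, \dots, \beta_m$: once the neighbours $\beta_p \preceq \beta_j a \preceq \beta_{p+1}$ are identified, disjointness and convexity of the class-intervals force $r \in \{p, p+1\}$, since a representative of any class strictly between $C_p$ and $C_{p+1}$ in the class order would have to appear between $\beta_p$ and $\beta_{p+1}$ in the sorted list. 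The tie is broken by comparing $(q,c)$: the pairs $(q_p,c_p)$ and $(q_{p+1},c_{p+1})$ must themselves differ (otherwise, again by Lemma~\ref{short}, a separating class would force a third $\beta$ between them), so exactly one of them matches the pair of $\beta_j a$ and gives the target.

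Summing up, Phase~1 processes the input in $O(k)$ time, and Phase~2 performs $\sigma m$ binary searches, each using $O(\log m)$ co-lex comparisons between words of length $O(d) = O(n^2)$, for a total of $O(k + n^2 \sigma m \log m)$. I expect the main subtlety to lie in the two convexity arguments used in Phases~1 and~2: both have the same flavour, namely that a hypothetical missing $\equiv_\mathcal L^c$-class would, via Lemma~\ref{short}, force the appearance of a short representative where none is observed, contradicting either consecutiveness in the input list or sortedness of the chosen representatives.
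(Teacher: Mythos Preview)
Your proof follows essentially the same two-phase approach as the paper: identify the $\equiv_\mathcal L^c$-classes as maximal runs of constant $(q,c)$ in the sorted list (using Lemma~\ref{short} and convexity), pick one representative $\beta_j$ per run to get a fingerprint, then compute each transition by binary-searching $\beta_j a$ among the $\beta$'s and resolving the two candidates via the $(q,c)$ pair. One small clean-up: in your Phase-2 tie-break, the contradiction from $(q_p,c_p)=(q_{p+1},c_{p+1})$ does not actually go through a ``third $\beta$''---since $C_p$ and $C_{p+1}$ are already adjacent class-intervals, every $\gamma\in\pf L$ with $\beta_p\preceq\gamma\preceq\beta_{p+1}$ lies in $C_p\cup C_{p+1}$ and hence satisfies $\gamma\equiv_\mathcal L\beta_p$, so the definition of $\equiv_\mathcal L^c$ directly yields $\beta_p\equiv_\mathcal L^c\beta_{p+1}$, the desired contradiction.
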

\begin{proof}[Sketch]
From Lemma \ref{short} we know that each $\equiv_\mathcal L^c$-class has at least one representative in $\pf L^{\le d}$. By considering the list $[\alpha_1]_\mathcal L, \dots, [\alpha_k]_\mathcal L$ we can spot such representatives: $\alpha_i$ represents a $\equiv_\mathcal L^c$-class if and only if $\alpha_i$ and $\alpha_{i+1}$ belong to different $\equiv_\mathcal L$-classes or they belong to the same class but their last character is different. 

Each representative of a $\equiv_\mathcal L^c$-class become a state of our WDFA. To compute the edges of the automaton, we simply check, for all representative $\beta$ and for all $c \in \Sigma$, the $\equiv_\mathcal L^c$-class of $\beta \cdot c$.
\end{proof}

\section{Conclusions}

Having an order on the set of states of an automaton that is consistent with a given order of the underlying alphabet  has important implication from a practical point of view. In fact, deterministic or even non-deterministic finite state automata enjoying this property---Wheeler automata---can be used to efficiently analyse the language they accept by standard tools. 

In this work we established a few limitations along the directions that one can imagine to take in order to extend the nice properties of Wheeler languages.

More specifically, we proved that adding as a degree of freedom the possibility to re-order the underlying alphabet produces a significantly more complex (NP-complete) class of languages.  In addition, the polynomial test we have for testing whether a language is Wheeler when the latter is presented by a \emph{deterministic} automaton, turns out much more complex (PSPACE-complete) if the language is presented by a non-deterministic one. The picture is completed by explicitly giving an algorithm that turns a DFA into a Wheeler automaton, whenever this is possible.

Proving the above limitations should clarify the role played by apparently secondary aspects of the definition of Wheeler language. This kind of study should help to better understand the nature of Wheeler languages, with the ultimate goal of singling out any feature that may admit some sort of extension. 

Consider, in more general terms, the following open problem: is there a class of   automata $\mathcal A$  properly extending the class of Wheeler automata and such that their  deterministic equivalent $\mathcal D_{\mathcal A}$ is such that  $|\mathcal D_{\mathcal A}| \in \text{poly}(|\mathcal A|)$?

\newpage
\section{Appendix}

\renewenvironment{proof}[1][\proofname]{{\bfseries #1.}}{\qed}

\begin{proof}[Proof of Corollary \ref{nfa length}]
Let $\mathcal D = (\hat Q, \hat q_0, \hat \delta, \hat F, \Sigma)$ be the minimum DFA recognizing $\mathcal L$. Clearly $\mathcal D$ has at most $2^n$ states.
\\($\Longleftarrow$) From condition 2 it follows that $\mu\gamma^* \subseteq \pf L$, so consider the following list of $2^n+1$ states of $\mathcal D$:
\[
\hat\delta(\hat q_0, \mu\gamma^0), \, \hat\delta(\hat q_0, \mu\gamma^1), \, \dots, \, \hat\delta(\hat q_0, \mu\gamma^{2^n}).
\]
Since $\mathcal D$ has at most $2^n$ states, there must exist two integers $0 \le h < k \le 2^n$ such that $\hat\delta(\hat q_0, \mu\gamma^h) = \hat\delta(\hat q_0, \mu\gamma^k)$. Therefore $\gamma^{k-h}$ labels a cycle starting from $\hat\delta(\hat q_0, \mu\gamma^h)$. Similarly, there exist $0 \le h' < k' \le 2^n$ such that $\gamma^{k'-h'}$ labels a cycle starting from $\hat\delta(\hat q_0, \nu\gamma^{h'})$. The words 
\begin{align*}
\hat\mu &:= \mu \gamma^{h} \\
\hat\nu &:= \nu \gamma^{h'} \\
\hat\gamma &:= \gamma^{\text{lcm}(k-h, k'-h')\cdot 2^n},
\end{align*}
where the factor $2^n$ in the definition of $\hat\gamma$ ensures that $|\hat\mu|, |\hat\nu| \le |\hat\gamma|$, satisfy condition 2 of Theorem \ref{polynomialW}. 
Condition 1 of Theorem \ref{polynomialW} follows automatically from conditions 1 of this corollary. Lastly, condition 3 of Theorem \ref{polynomialW} follows from conditions 3 of this corollary and the fact that $\gamma \ndashv \mu, \nu.$
Thus we can apply Theorem \ref{polynomialW} to conclude that $\mathcal L$ is not Wheeler.
\\$(\Longrightarrow)$
Since $\mathcal L = \la D$ is not Wheeler, let $\hat\mu, \hat\nu, \hat\gamma$ be as in Theorem \ref{polynomialW}. The DFA $\mathcal D$ has at most $2^n$ states, hence the length of $\hat\gamma$ is bounded by the constant $2^{3n}+2\cdot 2^{2n}+2^n+2$. 
We have $\hat\mu\hat\gamma^* \subseteq \pf L$, so let $t_0 = q_0, t_1, \dots, t_m$ be a run of $\hat\mu\hat\gamma^n$ over $\mathcal A$. We set $u := |\hat\mu|$ and $g := |\hat\gamma|$, and 
consider the list of $n+1$ states 
\[
t_u, \; t_{u+g}, \; t_{u+2g}, \; \dots, \; t_{u+ng} = t_m 
\]
Since $\mathcal A$ has $n$ states, there must exist two integers $0 \le h < k \le n$ such that $t_{u+hg} = t_{u+kg}$. That is, there exists a state $p := t_{u+hg}$ such that $p \in \delta\left(q_0, \hat\mu\hat\gamma^h\right)$ and $\hat\gamma^{k-h}$ labels a cycle starting from $p$. We can repeat the same argument for a run of $\hat\nu\hat\gamma^n$ over $\mathcal A$ to find a state $r$ and two integers $h', k'$ such that $r \in \delta(q_0, \hat\nu\hat\gamma^{h'})$ and $\hat\gamma^{k'-h'}$ labels a cycle starting from $r$. We can then define the words
\begin{align*}
\mu &:= \hat\mu \hat\gamma^{h} \\
\nu &:= \hat\nu \hat\gamma^{h'} \\
\gamma &:= \hat\gamma^{\text{lcm}(k-h, k'-h')\cdot n}
\end{align*}
which satisfy the conditions 2 and 3.
\\Condition 4 is satisfied since $|\hat\gamma| \le 2^{3n}+2\cdot2^{2n}+2^n+2$ and $\text{lcm}(k-h,k'-h') < n^2$.
\\Finally, condition 1 is satisfied for all $i,j\ge0$. Indeed, for all $l$ the words $\hat\mu$ and $\hat\mu\hat\gamma^l$ lead to the same state of $\mathcal D$, thus $\hat\mu \equiv_\mathcal L \hat\mu\hat\gamma^l$. Similarly, for all $l$ we also have $\hat\nu \equiv_\mathcal L \hat\nu\hat\gamma^l$. Since $\forall i\; \exists s_i$ such that $\mu\gamma^i = \hat\mu \hat\gamma^{s_i}$, and similarly, $\forall j\; \exists s_j$ such that $\nu \gamma^j= \hat\nu \hat\gamma^{s_j}$, the thesis follows from $\hat\mu \not\equiv_\mathcal L \hat\nu$. 
\end{proof}

\vskip5mm

\begin{proof}[\noindent Proof of Proposition \ref{pspace}]
First of all we need to prove that the problem is in PSPACE. We will show instead that its complement is in NPSPACE, then the thesis follows from Savitch's Theorem, which states that NPSPACE = PSPACE, and the fact that PSPACE is closed under complementation. 
We  prove that checking  the conditions in Corollary \ref{nfa length} is in NPSPACE. 
 We can use non-determinism to guess, bit by bit, the length of $\mu, \nu$ and $\gamma$ and store this guessed information in three counters $u, v, g$ respectively, using $O(n)$ space for each. We also need to guess  the ending states $p,r \in Q$  of $\mu, \nu$. Then we start    guessing  the  characters of $\mu, \nu$ and $\gamma$ starting from their last one and proceeding backwards toward their first one, checking condition 3 of Corollary \ref{nfa length} in constant space. Whenever we guess a character of $\mu$ (respectively, $\nu,\gamma$) we decrease by one the counter $u$ ($v,g$), so we know when the guessing stops. If condition 3 is satisfied, we calculate (we will show later how) the sets
 $\delta(q_0, \mu)$, $\delta(q_0, \nu)$, $\delta(p, \gamma)$, and $\delta(r, \gamma)$ and check condition 2, that is, whether $p\in \delta(q_0, \mu)$, $r\in \delta(q_0, \nu)$, $p\in \delta(p, \gamma)$, and $r\in \delta(r, \gamma)$.
 If condition 2 is satisfied, we consider condition 1: for fixed $i,j\leq 2^n$ consider the automata $A^i_{\mu\gamma}$ and $A^j_{\nu\gamma}$ obtained from the NFA $\la A$ by considering as initial states the sets  $\delta(q_0, \mu\gamma^i)$,  $\delta(q_0, \nu\gamma^j)$, respectively. Notice that  we have  $\mu\gamma^i \not\equiv_{\mathcal L} \nu\gamma^j$ if and only if  $\la{A^i_{\mu\gamma}} \neq \la{A^j_{\nu\gamma}}$ and   checking whether $\la{A^i_{\mu\gamma}} = \la{A^j_{\nu\gamma}}$ can be done in polynomial space, since deciding whether two NFAs recognize the same language is a well-known PSPACE-complete problem.
 
 To conclude the proof, we claim that we are able to calculate in polynomial space, for all $q \in Q$ and for all $\beta \in \{\mu,\nu,\gamma\}^*$, the set $\delta(q,\beta)$. While guessing $\mu$ character by character, we can compute, for each state $q$, the set $Q_{\mu,q}$ of the states from which is possible to reach $q$ reading $\mu$. To build $Q_{\mu,q}$ we start from the set $Q^0 := \{q\}$ and we follow backwards the edges entering $q$ and labeled as the last character of $\mu$. We call $Q^1$ this new set of states and we repeat the process by following backwards the edges entering each state of $Q^1$ and labeled as the second to last character of $\mu$. Proceeding inductively, we compute the sets $Q^0, Q^1, \dots Q^u = Q_{\mu,q}$; notice that to calculate $Q^{k+1}$ we only need $Q^k$ and the $k$-th to last character of $\mu$, thus we can update (instead of storing) the set $Q^k$. We can do the same for $\nu$ and $\gamma$ to compute, for each $q \in Q$, the sets $Q_{\nu,q}$ and $Q_{\gamma,q}$. Once we have stored, for all $q \in Q$ and for all $\alpha \in \{\mu, \nu, \gamma\}$, the sets $Q_{\alpha,q}$, our claim follows easily. For instance, to compute $\delta(q_0, \mu\gamma)$ we build, for all $t \in Q$, the set 
 \[
 Q(t) := \{q \in Q:\; t \in \delta(q, \mu\gamma) \} = \bigcup_{p\in Q_{\gamma,t}} Q_{\mu,p}.
 \] 
 Then we have $t \in \delta(q_0, \mu\gamma)$ if an only if $q_0 \in Q(t)$.

\vskip 5mm

To prove the completeness of the problem, we will show a polynomial reduction from the universality problem for NFA, i.e. the problem of deciding whether the language accepted by a NFA $\mathcal A$ over the alphabet $\Sigma$ is such that $\la A = \Sigma^*$. 

Let $\mathcal A = (Q, q_0,\delta, F, \Sigma)$ be a NFA and let $\mathcal L = \la A$. We can assume without loss of generality that $q_0 \in F$, otherwise $\mathcal A$ would not accept the empty word and we could immediately derive that $\mathcal L \ne \Sigma^*$. Let $a,b,c$ be three characters not in $\Sigma$ and such that $a\prec b\prec c$ with respect to the lexicographical order (the order of the characters of $\Sigma$ is irrelevant in this proof). First, we build the automaton $\mathcal A'$ starting from $\mathcal A$ by adding an edge $(q_f, q_0, c)$ for each final state $q_f \in F$, see the top part of Figure \ref{pspace}. Notice that $\mathcal A'$ recognizes the language $\mathcal L' = \la{A'} = (\mathcal Lc)^* \cdot \mathcal L$, and it is straightforward to prove that $\mathcal L = \Sigma^*$ if and only if $\mathcal L' = (\Sigma + c)^*$: if $\mathcal L = \Sigma^*$, let $\alpha$ be a word in $(\Sigma + c)^*$ containing $n$ occurrences of $c$. Then $\alpha = \alpha_0\, c\, \alpha_2\, c\,\dots\, \alpha_{n-1}\, c\, \alpha_{n}$ for some $\alpha_1, \dots, \alpha_{n} \in \Sigma^*$. Hence $\alpha \in (\Sigma^*c)^*\cdot \Sigma^* = \mathcal L'$. On the other hand, if $\mathcal L \ne \Sigma^*$ let $\alpha$ be a word in $\Sigma^* \setminus \mathcal L$. Then $\alpha \cdot c \notin \mathcal L'$.

\begin{figure}
\begin{center}
\begin{tikzpicture}[->,>=stealth', semithick, initial text={}, auto, scale=.4]
\node[state, label=above:{},initial] (0) at (-6,0) {$q'_0$};

\node[state, label=above:{}, accepting] (1) at (0,5) {$q_0$};
\node[state, label=above:{}, accepting] (2) at (5,-3) {$q_1$};
\node[state, label=above:{}] (3) at (10,5) {N};
\node[state, label=above:{}, accepting] (5) at (15,5) {A};
\node[] (6) at (5,5) {$\mathcal A$};
\node (7) at (-4.5,8) {$\mathcal A'$};
\draw[dashed, rounded corners = 10] (-2,3.3) rectangle (17,6.8);
\draw[dashed, rounded corners = 10] (-3,2.3) rectangle (18,9.8);

\draw (0) edge node {$a$} (1);
\draw (0) edge[below] node {$b$} (2);

\draw (1) edge[loop above] node {c} (1);

\draw (5) edge[bend right = 40, above] node {$c$} (1);
\draw (2) edge[loop right] node {$\Sigma, c$} (2);

\end{tikzpicture}
\end{center}
\caption{The automaton $\mathcal A''$. Every accepting state of $\mathcal A$, labeled A in the figure, has a back edge labeled $c$ connecting it to $q_0$. Conversly, non-accepting states of $\mathcal A$, labeled N in the figure, do not have such back edges.}
\label{fig:pspace}
\end{figure}
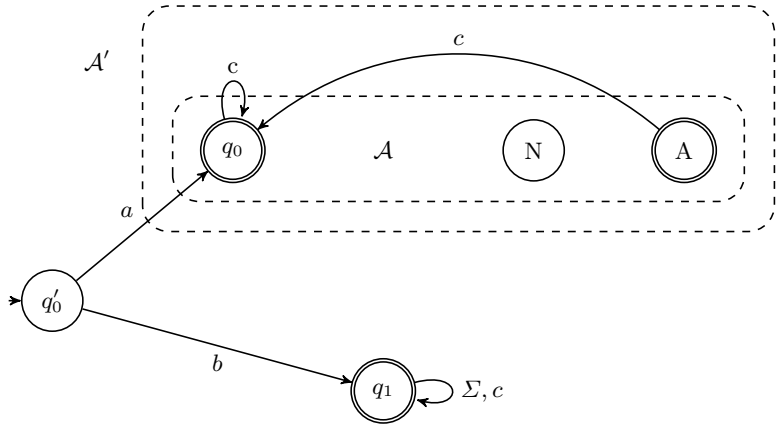

We build a second automaton $\mathcal A''$ as depicted in Figure \ref{fig:pspace}. Let $\mathcal L'' = \la {A''}$ be the language recognized by $\mathcal A''$. We claim that $\mathcal L = \Sigma^*$ if and only if $\mathcal L''$ is Wheeler.
\\$(\Longrightarrow)$ If $\mathcal L = \Sigma^*$, we have already proved that $\mathcal L' = (\Sigma+c)^*$. Hence we have $\mathcal L'' = (a + b) \cdot (\Sigma+c)^*$. The minimum DFA recognizing $\mathcal L''$ has only one loop, therefore by Theorem \ref{polynomialW} $\mathcal L''$ is Wheeler.
\\$(\Longleftarrow)$ If $\mathcal L \ne \Sigma^*$, let $\alpha$ be a word in $\Sigma^* \setminus \mathcal L$. Notice that $\alpha \ne \varepsilon$ since we assumed that $\varepsilon \in \mathcal L$. Every possible run of $\alpha$ over $\mathcal A$ must lead to a non-accepting state, hence $\alpha \cdot c \notin \mathcal L'$. 
This implies that for all $i \ge 0$ we have $a \cdot c^i \cdot \alpha \cdot c \notin \mathcal L''$ (notice that the only edge labeled $c$ leaving $q_0$ ends in $q_0$). On the other hand, for all $j \ge 0$ we have $bc^j \cdot \alpha \cdot c \in \mathcal L''$, hence for all $i, j \ge 0$ we have $ac^i \not\equiv_{\mathcal L''} bc^j$. 
Thus the following monotone sequence in $\pf {L''}$
\[
ac \prec bc \prec acc \prec bcc \prec \dots \prec ac^n \prec bc^n \prec \dots
\]
is not eventually constant modulo $\equiv_\mathcal{L''}$. From Lemma \ref{monotone} it follows that $\mathcal L''$ is not Wheeler.
\end{proof}

\vskip5mm

\begin{proof}[\noindent Proof of Lemma \ref{short}]
Suppose by contradiction that there exists a class $C_i$ such that for all $\alpha \in C_i$ it holds $|\alpha| \ge n + n^2$, and let $\alpha \in C_i$ be a word of minimum length. Consider the first $n+1$ states $q_0=t_0, ..., t_n$ of $\mathcal A$ visited by reading the first $n$ characters of $\alpha$. Since $\mathcal A$ has only $n$ states, there must exist $0 \le i,j \le n$ with $i < j$ such that $t_i=t_j$. Let $\alpha'$ be the prefix of $\alpha$ of length $i$ (if $i=0$ then $\alpha' = \varepsilon$), let $\delta$ be the factor of $\alpha$ of length $j-i$ labeling the path $t_i,...,t_j$, and let $\zeta$ be the suffix of $\alpha$ such that $\alpha = \alpha' \delta \zeta$. By construction, the words $\alpha$ and $\beta := \alpha' \zeta$ end in the same state, hence $\alpha \equiv_\mathcal L \beta$. Moreover, from $|\beta| < |\alpha|$ and the minimality of $\alpha$ it follows that $\alpha \not\equiv_\mathcal L^c \beta$. 
\\Suppose that $\alpha \prec \beta$, the other case being completely symmetrical. Since $\alpha$ and $\beta$ share the same suffix $\zeta$, they end with the same character. This means that the words $\alpha$ and $\beta$, which are Myhill-Nerode equivalent but not $\equiv_\mathcal L^c$ equivalent, were not split into two distinct $\equiv_\mathcal L^c$-classes due to input-consistency, therefore there must exists a word $\eta$ such that $\alpha \prec \eta \prec \beta$ and $\eta \not\equiv_\mathcal L \alpha$. Formally, assume by contradiction that for all words $\eta$ such that $\alpha \prec \eta \prec \beta$ it holds $\eta \equiv_\mathcal L \alpha$. Then, by definition of $\equiv_\mathcal L^c$, it would follow $\alpha \equiv_\mathcal L^c \beta$, a contradiction.
\\Let $\eta$ be a word such that $\alpha \prec \eta \prec \beta$ and $\eta \not\equiv_\mathcal L \alpha$. From $\zeta \dashv \alpha, \beta$ it follows that $\zeta \dashv \eta$, so we can write $\eta = \eta' \zeta$ for some $\eta' \in \Sigma^*$. Recall that by construction $\alpha = \alpha' \delta \zeta$ with $|\alpha' \delta| \le n$, hence $|\zeta| \ge n^2$. Consider the last $n^2+1$ states $r_0, ..., r_{n^2}$ of $\mathcal A$ visited by reading the word $\alpha$, and the last $n^2+1$ states $p_0, ..., p_{n^2}$ visited by reading the word $\eta$. Since $\mathcal A$ has only $n$ states, there must exist $0 \le i,j \le n^2$ with $i < j$ such that $(r_i, p_i) = (r_j, p_j)$. Notice that it can't be $r_i = p_i$, otherwise from the determinism of $\mathcal A$ it would follow $r_{n^2} = p_{n^2}$; from the minimality of $\mathcal A$ it would then follow $\alpha \equiv_\mathcal L \eta$, a contradiction.
\\Let $\zeta''$ be the suffix of $\zeta$ of length $n^2-j$, and let $\gamma$ be the factor of $\zeta$ of length $j-i$ labeling the path $r_i,...,r_j$. Since $|\zeta| \ge n^2$, there exists $\zeta' \in \Sigma^*$ such that $\zeta = \zeta' \gamma \zeta''$. We can then rewrite $\alpha, \eta$ and $\beta$ as
\begin{align*}
    \alpha &= \alpha' \delta \zeta = \alpha' \delta \zeta' \gamma \zeta'' \\
    \eta &= \eta' \zeta = \eta' \zeta' \gamma \zeta'' \\
    \beta &= \alpha' \zeta = \alpha' \zeta' \gamma \zeta''.
\end{align*}
Let $k$ be an integer such that $|\gamma^k|$ is greater than $|\alpha' \delta \zeta'|$ and $|\eta' \zeta'|$. Set $\mu := \eta' \zeta'$; from $\alpha \prec \eta \prec \beta$ it follows that $\alpha' \delta \zeta' \prec \mu \prec \alpha' \zeta'$. If $\gamma^k \prec \mu$ set $\nu := \alpha' \zeta'$, otherwise set $\nu := \alpha' \delta \zeta'$. In both cases, the hypothesis of Theorem \ref{polynomialW} are satisfied, since $\gamma^k$ labels two cycles starting from the states $r_i$ and $p_i$, that we have proved to be distinct. We can conclude that $\mathcal L$ is not Wheeler, a contradiction, and the thesis follows. 
\end{proof}

\vskip5mm

\begin{proof}[\noindent Proof of Proposition \ref{DFA to WDFA}]
Consider the pairwise distinct equivalence classes $C_1, \dots, C_m$ of the equivalence $\equiv_\mathcal L^c$. Clearly, the minimum Wheeler automaton recognizing $\mathcal L$ has $m$ states. We can assume without loss of generality that the equivalence classes are co-lexicographically ordered, i.e. $C_i \prec C_{i+1}$ for all $i<m$. For sake of simplicity, given a word $\alpha \in \Sigma^*$ we will write $[\alpha]$ to indicate its equivalence class modulo $\equiv_\mathcal L$, and we will write $[\alpha]_W$ to indicate its equivalence class modulo $\equiv_\mathcal L^c$. 
\\Let $K := \{ 1, ..., k \}$ and, for all $i$, let $K_i := \{ j\in K:\; [\alpha_j]_W = C_i \}$. Since each $C_i$ is convex in $\pf L$, each $K_i$ must be convex in $K$, that is each $K_i$ is an interval.  
Therefore the list of equivalence classes $[\alpha_1]_W, ..., [\alpha_k]_W$ must be partitioned in consecutive runs of the same class, each class appearing in one and only one run. From Lemma \ref{short} we know that each equivalence class has at least one representative in $\pf L^{\le d}$, hence the list $[\alpha_1]_W, ..., [\alpha_k]_W$ must contain exactly $m$ runs. 

For all $1 \le j < k$, we have $[\alpha_j]_W \ne [\alpha_{j+1}]_W$ if and only if
\[\Big( [\alpha_j] \ne [\alpha_{j+1}] \Big) \vee \Big( [\alpha_j] = [\alpha_{j+1}] \wedge \text{last}(\alpha_j) \ne \text{last}(\alpha_{j+1}) \Big), \]
where $\text{last}(\alpha)$ denotes the last character of $\alpha$. This means that we can identify the $m$ runs of the equivalence $\equiv_\mathcal L^c$ just by looking at the two lists $\alpha_1, ..., \alpha_k$ and $[\alpha_1], ..., [\alpha_k]$: whenever $\text{last}(\alpha_j) \ne \text{last}(\alpha_{j+1})$ or $[\alpha_j] \ne [\alpha_{j+1}]$, we know that a new $\equiv_\mathcal L^c$ run must start at $\alpha_{j+1}$. 
\\In $O(k)$ we are able to determine the $m$ runs and to pick a representative for each of them, i.e. we can find $m$ indexes $i_1, ..., i_m$ such that for all $1 \le j \le m$ it holds $\alpha_{i_j} \in C_i$. We call the set $\{ a_{i_1}, ..., a_{i_m} \}$ a \textit{fingerprint} of the language $L$, i.e. a set of words that has cardinality $m$ such that distinct elements of the set belong to distinct $\equiv_\mathcal L^c$-classes.

We show how to build the minimum Wheeler DFA recognizing $\mathcal L$, starting from any fingerprint of $\mathcal L$ and the standard minimum DFA recognizing $\mathcal L$. Let $\{ \beta_1, ..., \beta_m \}$ be a fingerprint of $\mathcal L$ and let $\mathcal A$ be the minimum DFA recognizing $\mathcal L$. We can assume without loss of generality that $\beta_1 \prec ... \prec \beta_m$. We build the automaton $\mathcal A^W=(Q, \beta_1, \delta, F,\Sigma)$, where the set of states is $Q = \{ \beta_1, ..., \beta_m \}$ and the set of final states is $F = \{ \beta_j:\; \beta_j \in \mathcal L \}$. The transition function $\delta$ can be computed as follow. For all $1 \le j \le m$ and for all $c \in \Sigma$, check whether $\beta_j \cdot c \in \pf L$. If $\beta_j \cdot c \notin \pf L$, there are no edges labeled $c$ that exit from $\beta_j$. If instead $\beta_j \cdot c \in \pf L$, locate $\beta_j \cdot c$ using a binary search. There are three possible cases.
\begin{enumerate}
    \item $\beta_j \cdot c \prec \beta_1$. Then $\delta(\beta_j, c) = \beta_1$.
\item $\beta_m \prec \beta_j \cdot c$. Then $\delta(\beta_j, c) = \beta_m$.
\item There exists $s$ such that $\beta_s \preceq \beta_j \cdot c \preceq \beta_{s+1}$. It can not be the case that both $\beta_jc \not\equiv_\mathcal L \beta_s$ and $\beta_jc \not\equiv_\mathcal L \beta_{s+1}$, since $\{ \beta_1, ..., \beta_m \}$ is a fingerprint of $\mathcal L$. Hence we distinguish three cases.
\begin{enumerate}
    \item $\beta_s \equiv_\mathcal L \beta_jc \not \equiv_\mathcal L \beta_{s+1}$. Then $\delta(\beta_j, c) = \beta_s$.
    \item $\beta_s \not\equiv_\mathcal L \beta_jc \equiv_\mathcal L \beta_{s+1}$. Then $\delta(\beta_j, c) = \beta_{s+1}$.
    \item $\beta_s \equiv_\mathcal L \beta_jc \equiv_\mathcal L \beta_{s+1}$. Since $\{ \beta_1, ..., \beta_m \}$ is a fingerprint of $\mathcal L$, it is either $c = \text{last}(\beta_jc) = \text{last}(\beta_s)$, in which case $\delta(\beta_j, c) = \beta_s$, or $c = \text{last}(\beta_{s+1})$, in which case $\delta(\beta_j, c) = \beta_{s+1}$.
\end{enumerate}
\end{enumerate}
\end{proof}

\vskip5mm

\begin{proof}[\noindent Proof of Proposition \ref{W to GW}]
The problem is in NP, since we can use non-determinism to guess the order of the alphabet and then check whether such order makes the NFA Wheeler.

To prove the hardness, we show a polynomial reduction from the problem of deciding whether a NFA is Wheeler. Let $\mathcal A$ be a NFA with initial state $q_0$, over the alphabet $\Sigma=\{a_1\dots,a_\sigma\}$ ordered by the relation $a_1\prec\dots\prec a_\sigma$. We want to build a new automaton $\mathcal A'$ such that $\mathcal A'$ is a GWNFA if and only if $\mathcal A$ is Wheeler. $\mathcal{A}'$ will be an automaton of size $|\mathcal{A}|+O(\sigma)$ over the alphabet $\Sigma'$ of size $O(\sigma)$. 

The automaton $\mathcal A'$ will be built starting from $\mathcal A$ and adding extra states and transitions. We define the new alphabet as
\[ \Sigma' = \{ a_1, \dots , a_\sigma, x_1, \dots, x_{\sigma-1}, e, f \}, \]
with $x_i, e, f \notin \Sigma $, and we add two final states $q_e$ and $q_f$. We then build $\sigma-1$ gadgets, one for each pair of consecutive characters $(a_i, a_{i+1})$ of $\Sigma$, each one connected to $\mathcal A \cup \{ q_e, q_f \}$ as depicted in Figure \ref{f2}. This completes the construction of the automaton $\mathcal A'$. Notice that $\mathcal A$ is input-consistent, but in general it is not deterministic.

\begin{figure}
\begin{center}
\begin{tikzpicture}[->,>=stealth', semithick, initial text={}, auto, scale=.4]
\node[state, label=above:{},initial] (0) at (-6,0) {$q_0$};

\node[state, label=above:{}] (1) at (0,3) {$q_i^3$};

\node[state, label=above:{}] (3) at (7,3) {$q_{i}^5$};
\node[state, label=above:{}] (4) at (10,7) {$q_{i}^7$};
\node[state, label=above:{}] (5) at (4,7) {$q_{i}^2$};

\node[state, label=above:{}] (6) at (7,-3) {$q_{i}^4$};
\node[state, label=above:{}] (7) at (10,-7) {$q_{i}^6$};
\node[state, label=above:{}] (8) at (4,-7) {$q_{i}^1$};

\node[state, label=above:{}, accepting] (9) at (14,3) {$q_e$};
\node[state, label=above:{}, accepting] (10) at (14,-3) {$q_f$};

\draw (0) edge node {$a_{i+1}$} (1);
\draw (0) edge[below, bend right=20] node {$x_i$} (6);

\draw (1) edge node {$x_i$} (3);

\draw (3) edge[below] node[ xshift=8pt] {$x_i$} (4);
\draw (4) edge[above] node {$a_i$} (5);
\draw (5) edge node {$x_i$} (3);
\draw (3) edge node {$e$} (9);

\draw (6) edge[below, pos=.2] node[xshift=16pt] {$x_i$} (7);
\draw (7) edge[above] node {$a_i$} (8);
\draw (8) edge node {$x_i$} (6);
\draw (6) edge node {$f$} (10);

\end{tikzpicture}
\end{center}
\caption{The gadget $G_i$, connected to $q_0$ and to the sinks $q_e$ and $q_f$.}
\label{f2}
\end{figure}
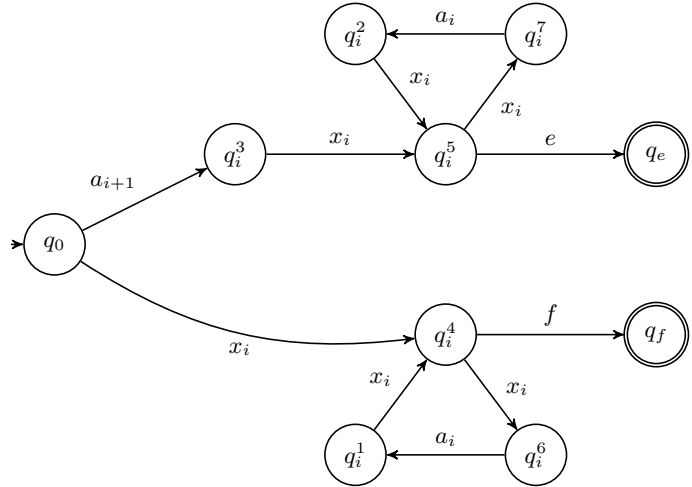

We want to show that $\mathcal{A}$ is Wheeler according to the order $( \Sigma, \prec )$ if and only if $\mathcal{A'}$ is a GWNFA.\\
($\Longrightarrow$) Define the order $\prec'$ over $\Sigma'$ by setting
\[
a_1 \prec' ... \prec' a_\sigma \prec' x_1 \prec' ... \prec' x_{\sigma-1} \prec' e \prec' f.\]
We show that $\mathcal A'$ is Wheeler according to $(\Sigma' , \prec')$ by ordering its states. Since $\mathcal{A}$ is Wheeler, there already exists an order of its states that makes $\mathcal{A}$ Wheeler. 
Therefore, we simply need to extend this order to the states of $\mathcal A'$. 
Recall that in \cite{ADPP} it has been proved that a WNFA is Wheeler if and only if, for each pair of states $q,p$ such that $I_q \ne I_p$, either $I_q \preceq I_p$ or $I_p \preceq I_q$ holds, where by definition $I_q \preceq I_p$ if and only if, for all $\alpha \in I_q$ and for all $\beta \in I_p$ such that $\{\alpha, \beta\}\not\subseteq I_q\cap I_p$, we have $\alpha \prec \beta$. Therefore we check, for each pair of states $q$ and $p$ in $\mathcal A'$, that either $I_q \preceq I_p$ (implying $q<p$) or $I_p \preceq I_q$ (implying $p<q$) holds.
Note that when $I_q$ and $I_p$ are disjoint, the condition $I_q \preceq I_p$ translates to the following: for all $\alpha \in I_q$ and for all $\beta \in I_p$, $\alpha \prec \beta$ . 
In the discussion that follows we never compare two states that belong to $\mathcal A$, since the order between them is already established.

First of all, we will order, for each $i$, the states with incoming edge $x_i$. Since $x_i \notin \Sigma$, the only states to compare are the one belonging to the gadget $G_i$, i.e. $q_i^4, q_i^5, q_i^6, q_i^7$. Consider the languages
\begin{align*}
    I_4:=I_{q_{i}^4}&=x_i(x_ia_ix_i)^*\\
    I_5:=I_{q_{i}^5}&=a_{i+1}x_i(x_ia_ix_i)^*\\
    I_6:=I_{q_{i}^6}&=x_ix_i(a_ix_ix_i)^*\\
    I_7:=I_{q_{i}^7}&=a_{i+1}x_ix_i(a_ix_ix_i)^*.
\end{align*}
Since $a_{i+1}\prec'x_i$ we have $I_4,I_5 \prec I_6,I_7$. Moreover, consider any two words $\alpha \in I_4$ and $\beta \in I_5$. 
If $|\alpha|<|\beta|$, we have $\alpha \dashv \beta$ hence $\alpha \prec \beta$. If $|\alpha|\ge|\beta|$ instead, then the last $|\beta|$ characters of $\alpha$ must be $a_ix_i(x_ia_ix_i)^m$, where $m$ is such that $\beta=a_{i+1}x_i(x_ia_ix_i)^m$. 
From $a_i \prec ' a_{i+1}$ we still get $\alpha \prec \beta$, hence $I_4 \prec I_5$. Similarly we can prove that $I_6 \prec I_7$. It immediately follows that we need to order the states as follows: $q_i^4 < q_i^5 < q_i^6 < q_i^7$.

Secondly, for each $1 \le i \le \sigma$ we need to sort the states with incoming edges labeled $a_i$. Note that the automaton $\mathcal A$ might contain states with incoming label $a_i$ and we need to consider such states as well. For $i=\sigma$, the task is easy. The only gadget with a state labeled $a_\sigma$ is $G_{\sigma-1}$, and such state is $q_{\sigma-1}^3$. 
Notice that $I_{q_{\sigma-1}^3}=\{a_{\sigma}\}$. 
Let $q$ be a state of $\mathcal A$ with $\lambda(q) = a_\sigma$. Since every word in $I_q$ ends with $a_\sigma$, it trivially follows that $I_{q_{\sigma-1}^3} \preceq I_q$. If $I_q \ne \{a_\sigma\}$, we are forced to set $q_{\sigma-1}^3 < q$. If instead $I_q = \{a_\sigma\}$, the order of $q$ and $q_{\sigma-1}^3$ does not matter. For sake of consistency, we set $q_{\sigma-1}^3 < q$. \\
For $i=1$, the only gadget with states labeled $a_1$ is $G_{1}$, and such states are $q_1^1$ and $q_1^2$. 
Let $q$ be a state of $\mathcal A$ with $\lambda(q) = a_1$ 
and consider the languages
\begin{align*}
    I_1:=I_{q_{1}^1}&=x_1x_1a_1(x_1x_1a_1)^*\\
    I_2:=I_{q_{1}^2}&=a_2x_1x_1a_1(x_1x_1a_1)^*.
\end{align*}
For all $\alpha \in I_q$, we have either $\alpha = a_1$ or $\alpha = \alpha' a_j a_1$ for some $\alpha' \in \Sigma^*$ and some $1 \le j \le \sigma$. 
In both cases, $\alpha$ must precede co-lexicographically every word of $I_1$ and $I_2$, thus $I_q \prec I_1, I_2$. To compare $I_1$ and $I_2$, consider any two words $\alpha \in I_1$ and $\beta \in I_2$. 
If $|\alpha|<|\beta|$, we have $\alpha \dashv \beta$ hence $\alpha \prec \beta$. If $|\alpha|\ge|\beta|$ instead, then the last $|\beta|$ characters of $\alpha$ must be $a_1x_1x_1a_1(x_1x_1a_1)^m$, where $m$ is such that $\beta=a_2x_1x_1a_1(x_1x_1a_1)^m$. 
From $a_1 < ' a_2$ we still get $\alpha \prec \beta$, hence $I_1 \prec I_2$. 
It follows that $q_1^1$ and $q_1^2$ must follow $q$ (and all the other states of $\mathcal A$ with label $a_1$), and we must set $q_1^1 < q_1^2$. \\
Lastly, if $1 < i < \sigma$ we should compare the sates $q_{i-1}^3, \; q_i^1$ and $q_i^2$ with the sates in $Q_i:= \{ q \in \mathcal A: \; \lambda(q) = a_i \} $. Applying both of the reasoning discussed in the cases $i=1$ and $i=\sigma$, we can conclude that the state $q_{i-1}^3$ precedes all the states in $Q_i$ and that the states $q_i^1$ and $q_i^2$ follow all the states in $Q_i$ and must be ordered as $q_i^1 < q_i^2$.

The order of the states of $\mathcal A'$ that we described makes $\mathcal A'$ Wheeler with respect to $(\Sigma' , \prec' )$, hence making it a GWNFA.\\ 
\noindent($\Longleftarrow$) If $\mathcal A'$ is a GWNFA, then there exists an order $\prec'$ over $\Sigma'$ that makes $\mathcal A'$ Wheeler. Since $\mathcal A$ is a sub-automaton of $\mathcal A'$, it follows that even $\mathcal A$ is Wheeler according to $\prec'$. 
Let $\widetilde \prec$ be the restriction of $\prec'$ over the alphabet $\Sigma$; we want to show that $\widetilde \prec$ is the same order as $\prec$. Assume by contradiction that $\widetilde \prec \ne \prec$. 
If for all $1 \le i < \sigma$ we have $a_i \widetilde \prec a_{i+1}$, then $\widetilde \prec = \prec$, a contradiction. Hence there exists $1 \le i < \sigma$ such that $a_{i+1} \widetilde \prec a_{i}$. Since $\prec'$ extends $\widetilde \prec$, this implies that $a_{i+1} \prec' a_{i}$. We will show that $\mathcal A'$ is not Wheeler according to $\prec'$, a contradiction. 
Define the words $\mu:=x_i,$ $\nu:=a_{i+1}x_i$ and $\gamma:=x_ia_ix_i$. From $\mu \dashv \gamma$ and $a_{i+1} \prec' a_{i}$ we have $\mu, \nu \prec \gamma$. The word $\gamma$ labels two cycles in $\mathcal A'$ starting from two distinct states, i.e. $q_i^4$ and $q_i^5$. 
Moreover, $\mu$ and $\nu$ label two paths that start from the initial state $q_0$ and end in $q_i^4$ and $q_i^5$ respectively. Since $q_i^4$ and $q_i^5$ are not Myhill-Nerode equivalent, we can apply Theorem \ref{polynomialW} to conclude that $\mathcal A'$ is not Wheeler according to $\prec'$, a contradiction. 
Thus $\widetilde \prec$ and $\prec$ coincide.\\
We have shown that $\mathcal A$ is Wheeler according to $\prec'$, and that $\prec'$ extends $\prec$. Therefore we can conclude that $\mathcal A$ is Wheeler according to $\prec$.
\end{proof}

\vskip5mm

\begin{definition}[Betweenness]
Input: a list $Y$ of $n$ distinct elements $Y=y_1,...,y_n$ and $k<n^3$ ordered triples $(a_1,b_1,c_1),..., (a_k,b_k,c_k)$, where each element of each triple belongs to $Y$. Elements belonging to the same triple are distinct.\\
Output: yes/no answer. The answer is ``yes'' if and only if there exists a total order < of $Y$ such that, for each $k$, either $a_k<b_k<c_k$ or $a_k>b_k>c_k$.
\end{definition}

\begin{proof}[\noindent Proof of Proposition \ref{GWDFA}]
We can prove that both problems are in NP using an argument similar to the one employed in the proof of Proposition \ref{W to GW}.

To prove the hardness, we show a polynomial reduction from the betweenness problem to both of the problems described; we will use exactly the same reduction for both problems. We start from an instance $I=(Y,K)$ of the betweenness problem, where $Y$ is the set $Y=\{y_1\dots,y_n\}$ and $K\subseteq \mathcal P(T^3)$ is a collection of $k$ triples $(a_1,b_1,c_1),\dots, (a_k,b_k,c_k)$, for some $1<k<n^3$. We build a DFA $\mathcal A$ of size $O(n+k)$, over an alphabet of size $O(n+k)$. The alphabet is $\Sigma=Y\cup\{x_1\dots,x_k,e,f\}$, where we introduce a new character $x_i$ for each triple $(a_i,b_i,c_i)\in K$ and two extra ``ending'' characters $e$ and $f$. To build $G$, we start with the initial state $q_0$ connected with $n$ states $q_1, \dots,q_n$ through the edges $(q_0,y_j,q_j)$ for each $1\le j\le n$. We also add two sinks $q_e$ and $q_f$, the only final states. 
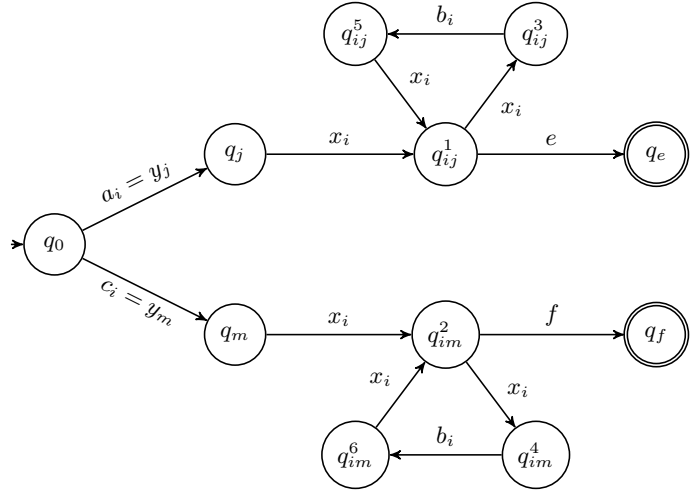
\begin{figure}
\begin{center}
\begin{tikzpicture}[->,>=stealth', semithick, initial text={}, auto, scale=.4]
\node[state, label=above:{},initial] (0) at (-6,0) {$q_0$};

\node[state, label=above:{}] (1) at (0,3) {$q_j$};
\node[state, label=above:{}] (2) at (0,-3) {$q_m$};

\node[state, label=above:{}] (3) at (7,3) {$q_{ij}^1$};
\node[state, label=above:{}] (4) at (10,7) {$q_{ij}^3$};
\node[state, label=above:{}] (5) at (4,7) {$q_{ij}^5$};

\node[state, label=above:{}] (6) at (7,-3) {$q_{im}^2$};
\node[state, label=above:{}] (7) at (10,-7) {$q_{im}^4$};
\node[state, label=above:{}] (8) at (4,-7) {$q_{im}^6$};

\node[state, label=above:{}, accepting] (9) at (14,3) {$q_e$};
\node[state, label=above:{}, accepting] (10) at (14,-3) {$q_f$};

\draw (0) edge[sloped] node {$a_i=y_j$} (1);
\draw (0) edge[sloped,below] node {$c_i=y_m$} (2);

\draw (1) edge node {$x_i$} (3);
\draw (2) edge node {$x_i$} (6);

\draw (3) edge[below] node[ xshift=8pt] {$x_i$} (4);
\draw (4) edge[above] node {$b_i$} (5);
\draw (5) edge node {$x_i$} (3);
\draw (3) edge node {$e$} (9);

\draw (6) edge[below, pos=.2] node[xshift=16pt] {$x_i$} (7);
\draw (7) edge[above] node {$b_i$} (8);
\draw (8) edge node {$x_i$} (6);
\draw (6) edge node {$f$} (10);

\end{tikzpicture}
\end{center}
\caption{The gadget $G_i$ related to the $i$-th triple.}
\label{f1}
\end{figure}
We add the states $q^1_{ij},q^3_{ij},q^5_{ij}$ (see Figure \ref{f1}) and the transitions \[ \delta(q_j,x_i) = q^1_{ij},\quad \delta(q^1_{ij},x_i) = q^3_{ij},\quad \delta(q^3_{ij},b_i) = q^5_{ij},\quad \delta(q^5_{ij},x_i) = q^1_{ij} \]
and $\delta(q^1_{ij},e) = q_e$. We repeat the same process with $c_i$: given the integer $m$ such that $c_i=y_{m}$, we add the states $q^2_{im},q^4_{im},q^6_{im}$ and the transitions \[ \delta(q_m,x_i)=q^2_{im},\quad \delta(q^2_{im},x_i)=q^4_{im},\quad\delta(q^4_{im},b_i)=q^6_{im},\quad\delta(q^6_{im},x_i)=q^2_{im} \]
and $\delta(q^2_{im},f)=q_f.$ Lastly, we remove the states among $q_1, \dots,q_n$ that don't have outgoing edges. More formally, we define the sets $A:=\{a_1,\dots,a_k\}$ and $C:=\{c_1,\dots,c_k\}$ and we remove from $G$ all the states $q_j$ such that $y_j\notin A\cup C$. 
%
%
We show that the instance $I=(Y,K)$ of the betweennes problem is satisfiable if and only if $\mathcal A$ is a GWNFA, if and only if $\la A$ is GW.\\
($\Longrightarrow$) Since $I=(Y,K)$ is satisfiable, there exists an ordering $\pi:Y\rightarrow \{1,...,n\}$ of the elements of $Y$ satisfying $I$. 
We order $\Sigma$ as follows: \[
\pi^{-1}(1)\prec...\prec\pi^{-1}(n)\prec x_1\prec\dots\prec x_k\prec e\prec f.
\]
This ordering induce a partial order on the states of $\mathcal A$, where states with different incoming labels are ordered by such labels. Therefore we only need to order the states of $\mathcal A$ with the same incoming label. \\For each $1\le i\le k$, the only states of $\mathcal A$ with incoming label $x_i$ are $q_{ij}^1,q_{ij}^3,q_{im}^2,q_{im}^4$, where $j$ and $m$ are integers such that $a_i=y_j$ and $c_i=y_m$. Since, by construction, the order $\pi$ satisfies the instance $I$, then only two cases can occur: either $\pi(a_i)<\pi(b_i)<\pi(c_i)$, or $\pi(c_i)<\pi(b_i)<\pi(a_i)$. In the first case, we set $q_{ij}^1<q_{im}^2<q_{ij}^3<q_{im}^4$. To realize that this is in fact the correct order of the states, consider the following languages:
\begin{align*}
    I_1&:=I_{q_{ij}^1}=\{\alpha\in\Sigma^*:\;\delta(q_0,\alpha)=q_{ij}^1\}=a_ix_i(x_ib_ix_i)^*\\
    I_2&:=I_{q_{im}^2}=c_ix_i(x_ib_ix_i)^*\\
    I_3&:=I_{q_{ij}^3}=a_ix_ix_i(b_ix_ix_i)^*\\
    I_4&:=I_{q_{im}^4}=c_ix_ix_i(b_ix_ix_i)^*.
\end{align*}
Since, by construction, we have $a_i,b_i,c_i\prec x_i$, it follows that $I_1,I_2\prec I_3,I_4$. Moreover, from $\pi(a_i)< \pi(b_i)< \pi(c_i)$ we also have that $I_1\prec I_2$ and $I_3\prec I_4$, which completes the ordering. Symmetrically, if $\pi(c_i)<\pi(b_i)<\pi(a_i)$ then we set $q_{im}^2<q_{ij}^1<q_{im}^4<q_{ij}^3$. \\
We still need to order the states of $\mathcal A$ whose incoming labels belong to $Y$. For each $1\le p\le n$, the states with incoming label $y_p$ belong to the sets $V_5:=\{q_{ij}^5:\;b_i=y_p\}$ or $V_6:=\{q_{im}^6:\;b_i=y_p\}$ or $V_p$, where $V_p=\{q_p\}$ if $q_p$ is a state of $\mathcal A$ (i.e. if $y_p\in A\cup C$) and $V_p=\emptyset$ otherwise. If $V_p=\{q_p\}$, we set $q_p$ as the smallest state. We then sort the states of $V_5$ and $V_6$ by their first subscript; when the first subscript is equal, i.e. the two states that we want to confront are $q_{ij}^5$ and $q_{im}^6$ (with $a_i=y_j$ and $c_i=y_m$), then we set $q_{ij}^5<q_{im}^6$ if $\pi(a_i)< \pi(b_i)<\pi(c_i)$ and we set $q_{im}^6<q_{ij}^5$ if $\pi(c_i)<\pi(b_i)<\pi(a_i)$. This can be deduced by confronting the following languages:
\begin{align*}
    I_5:=I_{q_{ij}^5}&=a_ix_ix_ib_i(x_ix_ib_i)^*\\
    I_6:=I_{q_{i'm}^6}&=c_{i'}x_{i'}x_{i'}b_{i'}(x_{i'}x_{i'}b_{i'})^*.
\end{align*}
If $i< i'$ then, by construction, we have $x_i\prec x_{i'}$, hence $I_5\prec I_6$. Symmetrically, if $i'<i$ then we have $I_6\prec I_5$. Lastly, if $i=i'$ then the order between $I_5$ and $I_6$ is determined solely by $a_i$ and $c_{i'}=c_i$. 

Since there is only one state with incoming label $e$ and only one state with incoming label $f$, we have finished. The order described makes $\mathcal A$ Wheeler, thus $\mathcal A$ is a GWNFA and $\la A$ is GW. 
\\($\Longleftarrow$) Assume that the instance $I=(Y,K)$ of the betweenness problem is unsatisfiable. We prove that $\la A$ is not GW. Assume by contradiction that $\la A$ is GW, then there exists an ordering $\pi'$ of the elements of $\Sigma$ such that $\la A$ is Wheeler according to said order. Recall that $Y\subseteq\Sigma$ and consider the order $\pi:=\left. \pi' \right|_{Y}$. Since $(Y,K)$ is unsatisfiable, $\pi$ must violate one of the constraints, i.e. there exists an $1\le i\le k$ such that either $\pi(a_i),\pi(c_i)<\pi(b_i)$ or $\pi(b_i)<\pi(a_i),\pi(c_i)$. Define the words $\mu:=a_ix_i,\; \nu:=c_ix_i$ and $\gamma:=x_ib_ix_i$; then it is either $\mu,\nu\prec\gamma$ or $\gamma\prec\mu,\nu$ (here the co-lexicographic order $\prec$ is calculated with respect to $\pi'$). By construction, $\gamma$ labels two cycles in $\mathcal A$ 
starting from two distinct states, $q_{ij}^1$ and $q_{im}^2$, which are not Myhill-Nerode equivalent. Moreover, $\mu$ and $\nu$ label two paths that start from the initial state $q_0$ and end in $q_{ij}^1$ and $q_{im}^2$ respectively. We can then apply the Theorem \ref{polynomialW} to conclude that $\la A$ is not Wheeler according to $\pi'$, a contradiction. Therefore $\la A$ is not GW, which automatically implies that $\mathcal A$ is not a GWNFA.
\end{proof}

\bibliographystyle{splncs04}
\bibliography{bibliography2}

\end{document}